\documentclass[%
 aip,
%jmp,
%bmf,
%sd,
%rsi,
%preprint,
 reprint,
 citeautoscript,
 superscriptaddress,
%linenumbers,
%author-year,
%author-numerical,
%Conference Proceedings
]{revtex4-1}

\usepackage{amsthm,amsmath,amssymb}
\usepackage{cleveref}
\usepackage{graphicx} % figure files
\usepackage{dcolumn} % align table columns on decimal point
\usepackage{bm} % bold math
\usepackage{romanbar}
\usepackage{subcaption}

\captionsetup{justification   = raggedright,
              singlelinecheck = false}

\usepackage[utf8]{inputenc}
\usepackage[T1]{fontenc}
\usepackage{soul}
\usepackage{siunitx}
\usepackage{xcolor} 
\usepackage{comment} % useful for commenting out large chunks of text

\newtheorem{theorem}{Theorem}
\newtheorem{corollary}[theorem]{Corollary}

\renewcommand{\vec}[1]{\boldsymbol{#1}} % vectors
\newcommand{\m}[1]{\boldsymbol{#1}} % matrices
 % absolute value
 % rename built-in command \d{} to \underdot{}
 % derivatives
 % double derivatives
 % partial derivatives
\newcommand{\beq}{\begin{equation}}
\newcommand{\eeq}{\end{equation}}
 % in-text equation reference
\newcommand{\bea}{\begin{eqnarray}}
\newcommand{\eea}{\end{eqnarray}}

\DeclareMathOperator{\tr}{tr}

\DeclareMathOperator{\Exp}{exp}

\def\E{\mathbb{E}}

% proper \colonequals alignment
\mathchardef\ordinarycolon\mathcode`\:
\mathcode`\:=\string"8000
\begingroup \catcode`\:=\active
  \gdef:{\mathrel{\mathop\ordinarycolon}}
\endgroup

% hyperbolic trigonometric functions
\DeclareMathOperator{\sech}{sech}

\begin{document}
% Total limit of the paper is 3500 words

\title{A generalized class of strongly stable and dimension-free T-RPMD integrators}

\author{Jorge L. Rosa-Ra\'ices$^\ast$}
\affiliation{%
Division of Chemistry and Chemical Engineering, California Institute of Technology, Pasadena, CA 91125, USA%
}
\author{Jiace Sun$^\ast$}
\affiliation{%
Division of Chemistry and Chemical Engineering, California Institute of Technology, Pasadena, CA 91125, USA%
}
\author{Nawaf Bou-Rabee}
\email{nawaf.bourabee@rutgers.edu}
\affiliation{%
Department of Mathematical Sciences, Rutgers University Camden, Camden, NJ 08102, USA%
}
\author{Thomas F. Miller III}
\email{tfm@caltech.edu}
\affiliation{%
Division of Chemistry and Chemical Engineering, California Institute of Technology, Pasadena, CA 91125, USA%
}

\begin{abstract}

Recent work shows that strong stability and dimensionality freedom are essential for robust numerical integration of thermostatted ring-polymer molecular dynamics (T-RPMD) and path-integral molecular dynamics (PIMD), without which standard integrators exhibit non-ergodicity and other pathologies [J.\ Chem.\ Phys.\ \textbf{151}, 124103 (2019); J.\ Chem.\ Phys.\ \textbf{152}, 104102 (2020)].
In particular, the BCOCB scheme, obtained via Cayley modification of the standard BAOAB scheme, features a simple reparametrization of the free ring-polymer sub-step that confers strong stability and dimensionality freedom and has been shown to yield excellent numerical accuracy in condensed-phase systems with large time-steps. 
Here, we introduce a broader class of T-RPMD numerical integrators that exhibit strong stability and dimensionality freedom, irrespective of the Ornstein--Uhlenbeck friction schedule.
In addition to considering equilibrium accuracy and time-step stability as in previous work, we evaluate the integrators on the basis of their rates of convergence to equilibrium and their efficiency at evaluating equilibrium expectation values.
Within the generalized class, we find BCOCB to be superior with respect to accuracy and efficiency for various configuration-dependent observables, although other integrators within the generalized class perform better for velocity-dependent quantities.
Extensive numerical evidence indicates that the stated performance guarantees hold for the strongly anharmonic case of liquid water.
Both analytical and numerical results indicate that  BCOCB  excels over other known integrators in terms of accuracy, efficiency, and stability with respect to time-step for practical applications.

\end{abstract}

\maketitle

\section{Introduction}
\label{introduction}

Path-Integral Molecular Dynamics (PIMD) provides a practical and popular tool to simulate condensed-phase systems subject to strong nuclear quantum effects.\cite{Parrinello1984,Habershon2013,Markland2018}
Based on the ring-polymer correspondence between quantum and classical Boltzmann statistics,\cite{Feynman1965,Chandler1981} PIMD exploits the computational methods of molecular dynamics\cite{Frenkel2002,Rapaport2004,Leimkuhler2015,Allen2017} to approximate quantum thermodynamics and kinetics through various classical models.\cite{Cao1994,Craig2004,Liu2014,Hele2015,Hele2015a,Cendagorta2018}
Applications of PIMD include calculations of chemical reaction rates,\cite{Craig2005,Craig2005a} diffusion coefficients,\cite{Miller2005,Miller2005a} absorption spectra,\cite{Habershon2008,Witt2009} solid and liquid structure,\cite{Morrone2009,Cheng2019} and equilibrium isotope effects.\cite{Zimmermann2009,Eldridge2019}

Many numerical integration schemes for PIMD are based on a symmetric Trotter (i.e., Strang) splitting\cite{Trotter1959,Strang1968} of the exact time-evolution operator, and feature a sub-step for free ring-polymer propagation.\cite{Tuckerman1993,Ceriotti2010,Liu2016}
Due to fast harmonic motions present in the free ring polymer, a \textit{strongly stable} implementation of this sub-step is essential.\cite{Calvo2009,Arnol2013}
Strong stability can be achieved by one of two approaches.
The first approach introduces a preconditioned form of the equations of motion by modifying the ring-polymer mass matrix.
Preconditioning improves the stability of the exact free ring-polymer update at the expense of consistent dynamics.\cite{Tuckerman1993,Minary2003,Liu2016,BouRabee2018,BouRabee2019,Lu2020}
The second approach does not modify the ring-polymer mass matrix, leaving the dynamics non-preconditioned,\cite{Ceriotti2010,Ceriotti2011,Rossi2014,Zhang2017,Rossi2018} and instead replaces the exact free ring-polymer update with a strongly stable approximation.\cite{Korol2019}
We apply the latter approach in the current work to Thermostatted Ring-Polymer Molecular Dynamics (T-RPMD),\cite{Rossi2014} a non-preconditioned variant of PIMD featuring an Ornstein--Uhlenbeck thermostat that approximately preserves the real-time dynamical accuracy of RPMD for quantum correlation functions of a wide range of observables.\cite{Braams2006}

In addition to strong stability of the free ring-polymer update, another basic requirement of a numerical integrator for T-RPMD is non-zero overlap between the numerically sampled and exact ring-polymer configurational distributions in the limit of an infinite number of ring-polymer beads.
Standard integrators fail to satisfy this requirement at any finite integration time-step,\cite{Korol2020} which motivates the introduction of \textit{dimension-free} T-RPMD schemes that allow for accurate configurational sampling with large time-stepping and arbitrarily many ring-polymer beads.
We recently found that standard integrators could be made dimension-free through the introduction of a suitable strongly stable ring-polymer update,\cite{Korol2020} and the current paper investigates this finding in much greater generality.

To this end, we introduce a function $\theta$ that defines the free ring-polymer update and deduce how the choice of $\theta$ impacts the properties and performance of the corresponding T-RPMD integrator.
The case $\theta(x) = x$, i.e., $\theta$ is the identity, corresponds to the exact free ring-polymer update.
Therefore, to ensure second-order accuracy, $\theta$ must approximate the identity near the origin, i.e., $\theta(0) = 0$, $\theta'(0) = 1$ and $\theta''(0) = 0$.
Moreover, strong stability requires that the range of the function $\theta$ is within $(0, \pi)$ for $x > 0$, and ergodicity and dimensionality freedom of the corresponding T-RPMD integrator impose additional requirements on $\theta$.
There are many choices of $\theta$ that fulfill the identified requirements, including $\theta(x) = 2\arctan(x/2)$ which leads to the BCOCB scheme introduced in Ref.~\onlinecite{Korol2020}.
In fact, we find that this choice of $\theta$ is superior for the estimation of configurational averages via T-RPMD from the perspectives of accuracy and efficiency, despite its poor performance with respect to the ring-polymer velocities.

The paper is organized as follows.
In Section~\ref{theory} we recall exact T-RPMD and its time discretization, present the new function $\theta$ that determines the free ring-polymer update, and obtain sufficient conditions on $\theta$ to guarantee strong stability and dimensionality freedom of the corresponding T-RPMD integrator.
In Section~\ref{numerical_results}, we compare the performance of various $\theta$ in applications to the one-dimensional quantum harmonic oscillator and to a quantum-mechanical model of room-temperature liquid water.
Section~\ref{summary} summarizes the work, and Section~\ref{suppmat} provides supporting mathematical proofs and computational protocols.

\section{Theory}
\label{theory}

\subsection{T-RPMD}
\label{trpmd}

Consider a one-dimensional quantum particle with the Hamiltonian operator
\begin{equation}
    \hat{H} = \frac{1}{2m} \hat{p}^2 + V(\hat{q}) \;,
\end{equation}
where $m$ is the particle mass, $\hat{q}$ and $\hat{p}$ the position and momentum operators, and $V(\hat{q})$ a potential energy surface.
Ignoring exchange statistics, the properties of this system at thermal equilibrium are encoded in the quantum partition function
\begin{equation}
    Q = \tr[e^{-\beta \hat{H}}] \;, 
\end{equation}
where $\beta = (k_B T)^{-1}$, $k_B$ is the Boltzmann constant and $T$ the physical temperature.
Using a path-integral discretization (i.e., a Trotter factorization of the Boltzmann operator\cite{Trotter1959}), $Q = \lim_{n \to \infty} Q_n$ can be approximated by the classical partition function $Q_n$ of a ring polymer with $n$ beads,\cite{Feynman1965,Chandler1981}
\begin{equation}
    Q_n = \frac{m^n}{(2\pi\hbar)^n} \int \mathrm{d}^n \vec{q} \int \mathrm{d}^n \vec{v} \, e^{-\beta H_n(\vec{q}, \vec{v})} \;,
\end{equation}
where $\vec{q} = \begin{bmatrix} q_0 & \dots & q_{n-1} \end{bmatrix}^\mathrm{T}$ is the vector of bead positions and $\vec{v}$ the corresponding vector of velocities.
The ring-polymer Hamiltonian is given by
\begin{equation} \label{eq:rpham}
    H_n(\vec{q}, \vec{v}) = H_n^0(\vec{q}, \vec{v}) + V^{\textrm{ext}}_n(\vec{q}) \;,
\end{equation}
which includes contributions from the physical potential
\begin{equation}
    V^{\textrm{ext}}_n(\vec{q}) = \frac{1}{n} \sum_{j=0}^{n-1} V(q_j)
\end{equation}
and the free ring-polymer Hamiltonian
\begin{equation}
    H_n^0(\vec{q}, \vec{v}) = \frac{m_n}{2} \sum_{j=0}^{n-1} \left[ v_{j}^2 + \omega_n^2 (q_{j+1} - q_{j})^2 \right] \;,
\end{equation}
where $m_n = m/n$, $\omega_n = n/(\hbar\beta)$ and $q_{n} = q_{0}$.

T-RPMD evolves the phase $\begin{bmatrix} \vec{q}^\mathrm{T} & \vec{v}^\mathrm{T} \end{bmatrix}^\mathrm{T}$ of the ring polymer as per
\begin{equation} \label{eq:pile}
\begin{aligned}
    \dot{\vec{q}}(t) = \vec{v}(t) \;; ~
    &\dot{\vec{v}}(t) = - \m{\Omega}^2 \vec{q}(t) + m_n^{-1} \vec{F}( \vec{q}(t) ) \\
    & \quad - \m{\Gamma} \vec{v}(t) + \sqrt{2 \beta^{-1} m_n^{-1}} \m{\Gamma}^{1/2} \dot{\vec{W}}(t) \;,
\end{aligned}
\end{equation}
which is a coupling of the Hamitonian dynamics of $H_n(\vec{q}, \vec{v})$ with a Ornstein--Uhlenbeck thermostat.
In Eq.~\ref{eq:pile} we introduced $\vec{F}(\vec{q}) = -\nabla V^{\textrm{ext}}_n(\vec{q})$, an $n$-dimensional standard Brownian motion $\vec{W}(t)$ and the $n \times n$ matrices
\begin{equation} \label{eq:gammaomega}
\begin{aligned}
   \m{\Omega} \ &= \ \m{U} \operatorname{diag}\left(0, \omega_{1,n}, \ldots, \omega_{n-1,n}\right) \m{U}^{\mathrm{T}} \;\text{and} \\
   \m{\Gamma} \ &= \ \m{U} \operatorname{diag}\left(0, \gamma_{1,n}, \ldots, \gamma_{n-1,n}\right) \m{U}^{\mathrm{T}} \;,
\end{aligned}
\end{equation}
where $\gamma_{j,n} \ge 0$ is the $j$th friction coefficient, $\m{U}$ the $n\times n$ real discrete Fourier transform matrix, and the ring-polymer frequencies are given by
\begin{equation} \label{eq:eigenvalues}
    \omega_{j,n} =
    \begin{cases}
    2 \omega_n \sin \left( \frac{\pi j}{2 n} \right) & \text{if $j$ is even} \;, \\
    2 \omega_n \sin \left( \frac{\pi (j+1)}{2 n} \right)  & \text{else} \;.
    \end{cases}
\end{equation}
Observe that the zero-frequency (i.e., centroid) ring-polymer mode is uncoupled from the thermostat, and the coefficients $\{\gamma_{j,n}\}_{j=1}^{n-1}$ in Eq.~\ref{eq:gammaomega} constitute the friction schedule applied to the non-centroid modes.

Numerical integrators for Eq.~\ref{eq:pile} typically employ symmetric propagator splittings of the form\cite{Bussi2007,Leimkuhler2013,BouRabee2014}
\begin{equation} \label{eq:splitwithnoise}
\begin{aligned}
 e^{\Delta t \mathcal{L}_n}
 &\approx
 e^{a \frac{\Delta t}{2}\mathcal{O}_n}
 e^{\frac{\Delta t}{2}\mathcal{B}_n}
 e^{\frac{\Delta t}{2} \mathcal{A}_n}
 e^{(1-a) \Delta t\mathcal{O}_n} \\
 &\times
 e^{\frac{\Delta t}{2} \mathcal{A}_n}
 e^{\frac{\Delta t}{2}\mathcal{B}_n}
 e^{a \frac{ \Delta t}{2}\mathcal{O}_n}
 \quad \text{with $a \in \{0, 1\}$,}
\end{aligned}
\end{equation} 
where the operator $\mathcal{L}_n = \mathcal{A}_n + \mathcal{B}_n + \mathcal{O}_n$ includes contributions from the $n$-bead free ring-polymer motion ($\mathcal{A}_n$), the external potential ($\mathcal{B}_n$) and the thermostat ($\mathcal{O}_n$), and $\Delta t$ is a sufficiently small time-step.
Note that the standard microcanonical RPMD integrator is recovered in the limit of zero coupling to the thermostat,\cite{Ceriotti2010} and that Eq.~\ref{eq:splitwithnoise} yields the OBABO scheme of Bussi and Parrinello~\cite{Bussi2007} if $a = 1$ and the BAOAB scheme of Leimkuhler~\cite{Leimkuhler2013} if $a = 0$.

Standard implementations of the T-RPMD splittings in Eq.~\ref{eq:splitwithnoise} use the exact free ring-polymer propagator $e^{\frac{\Delta t}{2} \mathcal{A}_n}$ to evolve the uncoupled ring-polymer modes; however, recent work by us\cite{Korol2019} showed that such implementations exhibit poor ergodicity if large numbers $n$ of ring-polymer beads are employed in conjunction with large time-steps $\Delta t$, and suggested replacing the exact ring-polymer propagator with its Cayley approximation\cite{BouRabee2017} for improved performance.
Follow-up work\cite{Korol2020} introduced a Cayley-modified BAOAB scheme, denoted BCOCB, and presented numerical evidence that cemented the scheme as an improvement over standard BAOAB due to its superior equilibrium accuracy and time-step stability.

Generalizing beyond the Cayley modification, the current work studies a family of modified BAOAB schemes that contains BCOCB and introduces others with similar theoretical guarantees.
Specifically, the BAOAB modifications are obtained by replacing the exact free ring-polymer update in Eq.~\ref{eq:splitwithnoise} with approximations that endow the properties listed below.
\begin{enumerate}
    \item[(P1)] \textit{Strong stability.}
    For a free ring polymer (i.e., for $V(q) = \textrm{const.}$), the integrator with $\gamma_{j,n} = 0$ is both strongly stable and second-order accurate in $\Delta t$.
    \item[(P2)] \textit{Free ring-polymer ergodicity.}
    For a free ring polymer, the integrator with $\gamma_{j,n} > 0$ is ergodic with respect to the distribution with density proportional to $e^{-\beta H_n^0(\vec{q}, \vec{v})}$.  
    \item[(P3)] \textit{Dimension-free stability.}
    For a harmonically confined ring polymer (i.e., for $V(q) = (\Lambda/2) \, q^2$), the integrator with $\gamma_{j,n} = 0$ is stable for any $n$ if $\Delta t$ leads to stable integration for $n = 1$.
    \item[(P4)] \textit{Dimension-free ergodicity.}
    For a harmonically confined ring polymer, the integrator with $\gamma_{j,n} > 0$ and stable $\Delta t$ is ergodic with respect to its stationary distribution for any $n$.
    \item[(P5)] \textit{Dimension-free equilibrium accuracy.}
    For a harmonically confined ring polymer, the integrator leaves invariant an accurate approximation of the distribution with density proportional to $e^{ -\frac{\beta m_n}{2} \vec{q}^\mathrm{T} \left( \frac{\Lambda}{m} + \m{\Omega}^2 \right) \vec{q} }$, with bounded error for any $n$.
\end{enumerate}

To obtain integrators satisfying properties~(P1)-(P5), we introduce a function $\theta$ that defines the free ring-polymer update and then construct $\theta$ accordingly.
To this end, let
\begin{equation} \label{eq:gen_cay}
    \m{\mathcal{S}}_{j,n}^{1/2} \ = \ \m{\mathcal{Q}}_{j,n}
	\begin{bmatrix}
	e^{i \theta( \omega_{j,n} \Delta t )/2} & 0 \\
	0 & e^{-i \theta(\omega_{j,n} \Delta t)/2}
	\end{bmatrix}
	\m{\mathcal{Q}}_{j,n}^{-1} \;,
\end{equation}
where $\m{\mathcal{Q}}_{j,n} = \begin{bmatrix} 1 & 1 \\ i \omega_{j,n} & - i \omega_{j,n} \end{bmatrix}$ and essential properties of $\theta$ are determined in the sequel.
We focus on T-RPMD schemes derived from the BAOAB splitting (i.e., $a = 0$ in Eq.~\ref{eq:splitwithnoise}) with the exact free ring-polymer update replaced by $\m{\mathcal{S}}_{j,n}^{1/2}$.
For such schemes, an integration time-step is comprised by the following sequence of sub-steps:
\begin{enumerate}
\item[(B)] Update velocities for half a step: $\vec{v} \leftarrow \vec{v} + \frac{\Delta t}{2} \frac{\vec{F}}{m_n}$.
\item[] Convert bead Cartesian coordinates to normal modes using 
 \begin{equation} \label{eq:nm_transform}
     \vec{\varrho} = \m{U}^{\mathrm{T}} \vec{q} \qquad \text{and} \qquad 
     \vec{\varphi} = \m{U}^{\mathrm{T}} \vec{v} \;.
 \end{equation}
\item[(A)] Evolve
the free ring polymer in normal-mode coordinates for half a step:
 \begin{equation*}
     \begin{bmatrix} \varrho_j \\ \varphi_j \end{bmatrix}
     \leftarrow \m{\mathcal{S}}_{j,n}^{1/2} \begin{bmatrix} \varrho_j \\ \varphi_j \end{bmatrix} \quad \text{for $0 \le j \le n-1$.}
 \end{equation*}
\item[(O)] Perform an Ornstein--Uhlenbeck velocity update for a full time-step: 
 \begin{equation*}
     \varphi_j \leftarrow e^{- \gamma_{j,n} \Delta t} \varphi_j + \sqrt{\frac{1 - e^{-2 \gamma_{j,n} \Delta t}}{\beta m_n}} \xi_j \;,
 \end{equation*}
 where $\xi_j$ are independent standard normal random variables and $0 \le j \le n-1$.
\item[(A)] Evolve
the free ring polymer in normal-mode coordinates for half a step:
 \begin{equation*}
	 \begin{bmatrix} \varrho_j \\ \varphi_j \end{bmatrix}
	 \leftarrow \m{\mathcal{S}}_{j,n}^{1/2} \begin{bmatrix} \varrho_j \\ \varphi_j \end{bmatrix} \quad \text{for $0 \le j \le n-1$.}
 \end{equation*}
\item[] Convert back to bead Cartesian coordinates using the inverse of $\m{U}$, which is just its transpose since $\m{U}$ is orthogonal.
\item[(B)] Update velocities for half a step: $\vec{v} \leftarrow \vec{v} + \frac{\Delta t}{2} \frac{\vec{F}}{m_n}$.
\end{enumerate}

In the remainder of this section, we identify conditions on the choice of $\theta$ that imply properties~(P1)-(P5) for the corresponding T-RPMD integrator.
Despite our focus on BAOAB-like splittings, we describe how the conditions on $\theta$ can be adjusted to construct integrators derived from the OBABO splitting (i.e., $a = 1$ in Eq.~\ref{eq:splitwithnoise}) that satisfy properties~(P1)-(P5).

\subsection{Strong stability of RPMD with a constant external potential}
\label{rpmd_free_particle}

In this section, sufficient conditions on $\theta$ are identified to satisfy property~(P1) in Section~\ref{trpmd}.
Let $V(q) = \textrm{const.}$ and $\gamma_{j,n} = 0$ for $1 \le j \le n-1$, corresponding to the free ring polymer.
The $j$th normal mode $\begin{bmatrix} \varrho_j & \varphi_j \end{bmatrix}^\mathrm{T}$ satisfies
\begin{equation} \label{eq:free}
    \begin{bmatrix} \dot \varrho_j \\ \dot \varphi_j \end{bmatrix} = \m{A}_{j,n} \begin{bmatrix} \varrho_j \\ \varphi_j \end{bmatrix} \; \quad \text{where} \quad \m{A}_{j,n} = \begin{bmatrix} 0 & 1 \\ -\omega_{j,n}^2 & 0 \end{bmatrix} \;. 
\end{equation}
In this case, the algorithm from Section~\ref{trpmd} reduces to a full step of $\m{\mathcal{S}}_{j,n} \approx \Exp(\Delta t \m{A}_{j,n})$, i.e.,
\begin{equation} \label{eq:G_1D}
    \begin{bmatrix} \varrho_j \\ \varphi_j \end{bmatrix}
    \leftarrow \m{\mathcal{S}}_{j,n} \begin{bmatrix} \varrho_j \\ \varphi_j \end{bmatrix} \quad \text{for $0 \le j \le n-1$,}
\end{equation}
where $\m{\mathcal{S}}_{j,n} = \m{\mathcal{S}}_{j,n}^{1/2} \m{\mathcal{S}}_{j,n}^{1/2}$ follows from Eq.~\ref{eq:gen_cay} and the function $\theta$ is such that property~(P1) holds.

We proceed to identify sufficient conditions on $\theta$ such that the corresponding free ring-polymer update satisfies property~(P1).
First note that for any function $\theta$ such that $\theta(-x) = -\theta(x)$ for $x > 0$, the structure of $\m{\mathcal{S}}_{j,n}^{1/2}$ guarantees that the corresponding free ring-polymer update is reversible, symplectic, and preserves the free ring-polymer Hamiltonian $H_n^0(\vec{q}, \vec{v})$.
Now, observe that $\m{\mathcal{S}}_{j,n}$ is exact if $\theta(x) = x$; therefore, second-order accuracy requires that $\theta$ approximates the identity near the origin, i.e.,
\begin{equation} \label{eq:C1}
    \theta(0) = 0, \, \theta'(0) = 1, \,  \text{and}~ \, \theta''(0) = 0.
    \tag{C1}
\end{equation}
Moreover, strong stability follows if the eigenvalues $e^{\pm i \theta(\omega_{j,n} \Delta t)}$ of $\m{\mathcal{S}}_{j,n}$ are distinct;\cite{Korol2019} to this end we require that
\begin{equation} \label{eq:C2}
    0 < \theta(x) < \pi \quad \text{for $x > 0$.}
    \tag{C2}
\end{equation}
Jointly, conditions~\eqref{eq:C1} and~\eqref{eq:C2} guarantee that the update in Eq.~\ref{eq:G_1D} satisfies property~(P1).
There are many different choices of $\theta$ that obey these conditions, e.g., $\theta(x) = \arctan(x)$, $\arccos(\sech(x))$,\footnote{%
The function $\theta(x) = \arccos(\sech(x))$ is not differentiable at the origin and hence, strictly speaking, does not satisfy condition~\eqref{eq:C1}.
Moreover, the function has even symmetry and hence fails to yield a reversible free ring-polymer update.
These formal shortcomings can be fixed by multiplying the function by $\mathrm{sign}(x)$, which we implicitly do for this and other functions $\theta$ with similar features.
}
and $2\arctan(x/2)$.
The latter choice leads to the Cayley approximation of the free ring-polymer update, as can be verified by substitution in Eq.~\ref{eq:gen_cay} and comparison of the resulting $\m{\mathcal{S}}_{j,n}^{1/2}$ with Eq.~17 in Ref.~\onlinecite{Korol2020}.
Figure~\ref{fig:example_theta} compares the eigenvalues of $\m{\mathcal{S}}_{j,n}$ with $\theta(x) = x$ and several choices of $\theta$ that meet conditions~\eqref{eq:C1} and~\eqref{eq:C2}.

\begin{figure}
    \begin{subfigure}[b]{0.45\columnwidth}
        \includegraphics[width=1.0\linewidth]{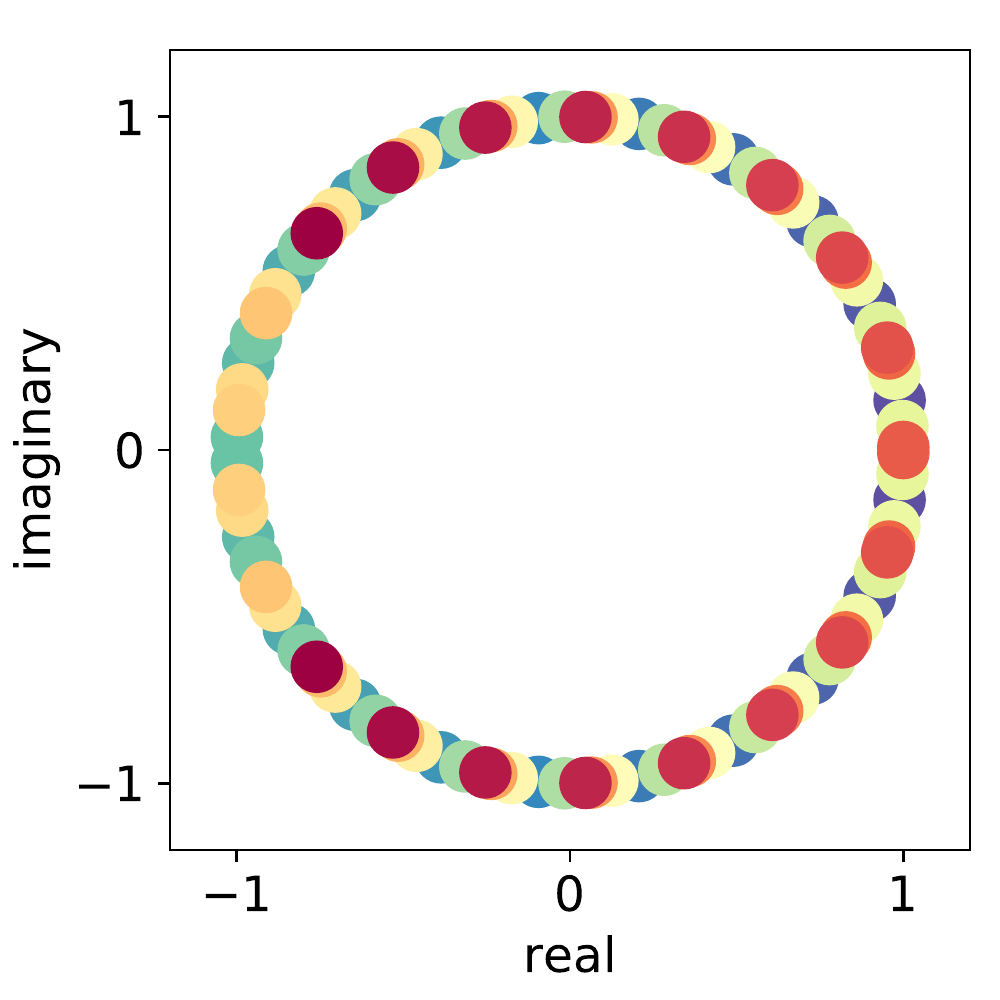}
        \caption{$\theta(x) = x$} 
    \end{subfigure}
    \begin{subfigure}[b]{0.45\columnwidth}
        \includegraphics[width=1.0\linewidth]{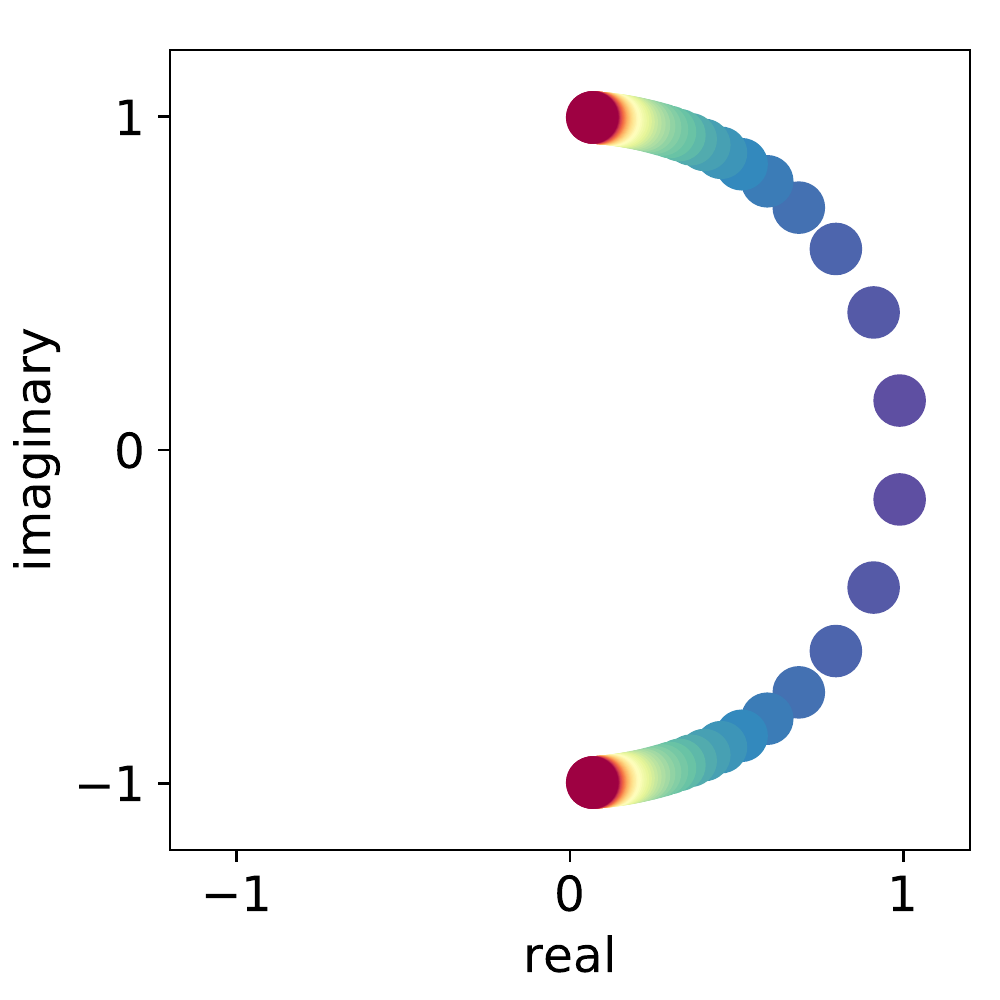}
        \caption{$\theta(x) = \arctan(x)$}
    \end{subfigure}
    \begin{subfigure}[b]{0.45\columnwidth}
        \includegraphics[width=1.0\linewidth]{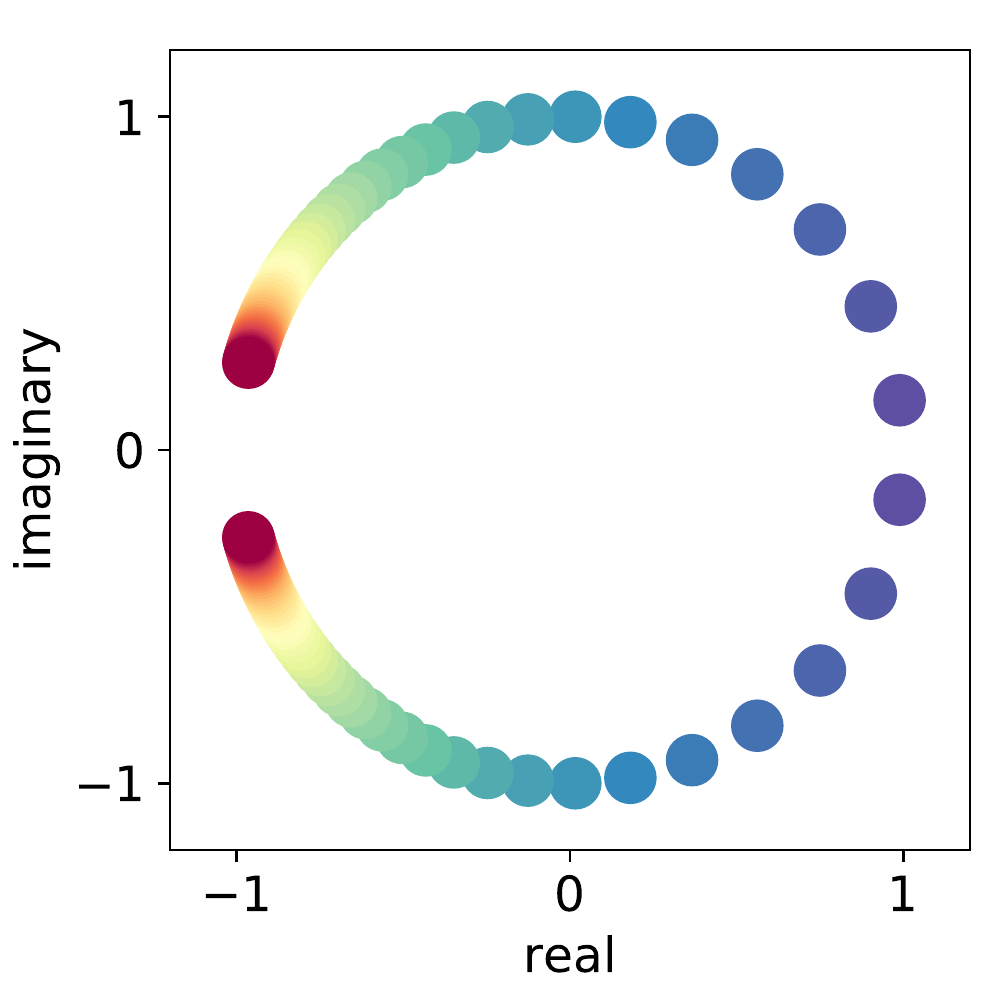}
        \caption{$\theta(x) = 2\arctan(x/2)$}
    \end{subfigure}
    \begin{subfigure}[b]{0.45\columnwidth}
        \includegraphics[width=1.0\linewidth]{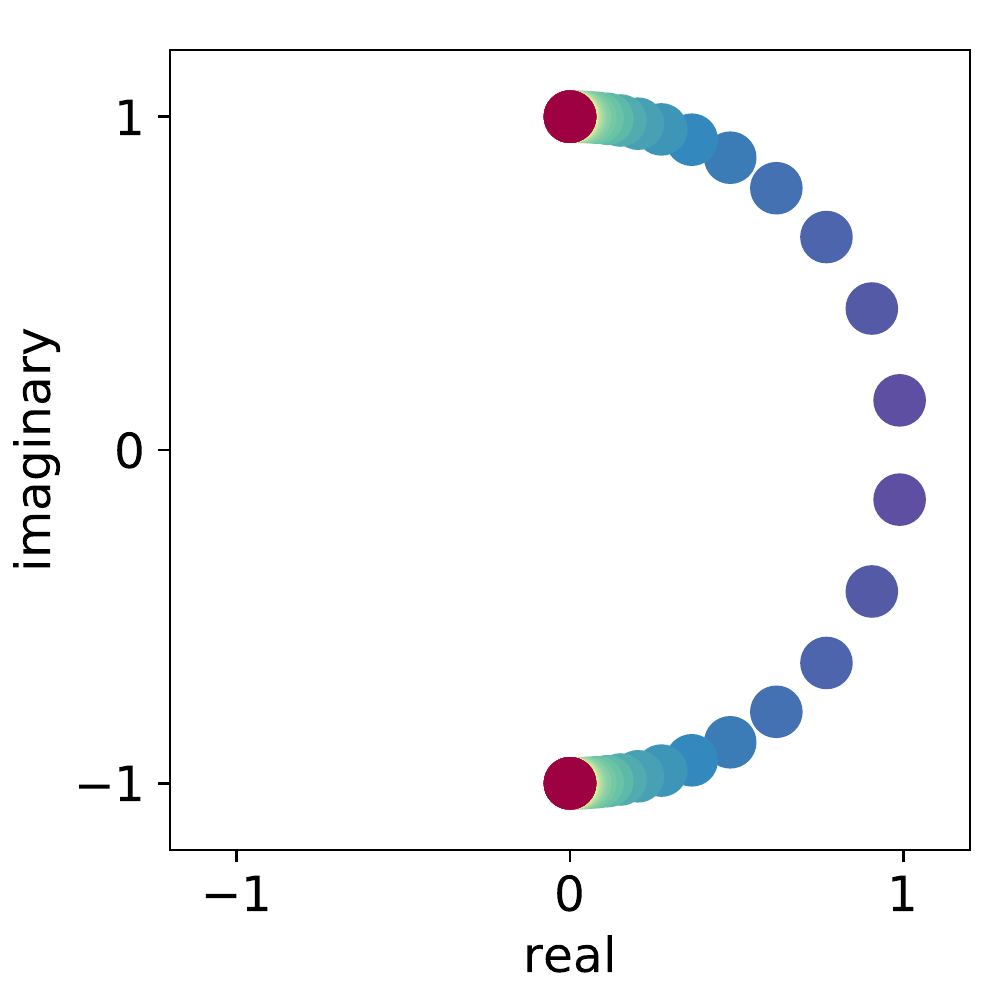}
        \caption{$\theta(x) = \arccos(\sech(x))$}
    \end{subfigure}
    \caption{\small
    Eigenvalues of $\m{\mathcal{S}}_{j,n}$ for 50 different time step sizes between 0.05 and 5.0 (evenly spaced) and fixed Matsubara frequency $\omega = 3$.
    The colors go from blue (smallest time step) through green and yellow to red (largest time step).
    In panel~(a), the eigenvalues rotate around the unit circle several times, which indicates that the corresponding $\m{\mathcal{S}}_{j,n}$ is not always strongly stable.
    In panels~(b),~(c) and~(d), the eigenvalues are distinct and on the unit circle; thus the corresponding $\m{\mathcal{S}}_{j,n}$ is strongly stable.
    } \label{fig:example_theta}
\end{figure}

\subsection{Ergodicity of T-RPMD with a constant external potential}
\label{trpmd_free_particle}

In this section, it is shown that condition~\eqref{eq:C2} implies property~(P2) in Section~\ref{trpmd}.
Let $V(q) = \textrm{const.}$ and $\gamma_{j,n} > 0$ for $1 \le j \le n-1$, corresponding to the free ring polymer with a Ornstein--Uhlenbeck thermostat.  In this case, the $j$th normal mode satisfies
\begin{equation} \label{eq:langevin_equations_1D}
    \begin{bmatrix} \dot \varrho_j \\ \dot \varphi_j \end{bmatrix} = ( \m{A}_{j,n} + \m{C}_{j,n} ) \begin{bmatrix} \varrho_j \\ \varphi_j \end{bmatrix}
    + \begin{bmatrix} 0 \\ \sqrt{ \frac{ 2 \gamma_{j,n} }{ \beta m_n } } \dot W_j \end{bmatrix} \;,
\end{equation}
where $\m{C}_{j,n} = \begin{bmatrix} 0 & 0 \\ 0 & -\gamma_{j,n} \end{bmatrix}$ and $\dot W_j$ is a scalar white-noise.
The solution $\begin{bmatrix} \varrho_j(t) & \varphi_j(t) \end{bmatrix}^\mathrm{T}$ of Eq.~\ref{eq:langevin_equations_1D} is an ergodic Markov process, and in the limit as $t \to \infty$, its distribution converges to the centered bivariate normal with covariance
\begin{equation} \label{eq:exact_IM_1D}
    \m{\Sigma}_{j,n} \ = \ \frac{1}{\beta m_n} \begin{bmatrix} s_{j,n}^2 & 0 \\ 0 & 1 \end{bmatrix} \;\; \text{where} \;\; s_{j,n}^2 \ = \ \frac{1}{\omega_{j,n}^{2}} \;.
\end{equation}
This distribution corresponds to the $j$th marginal of the free ring-polymer equilibrium distribution with density proportional to $e^{- \beta H_n^0(\vec{q}, \vec{v}) }$.

The choice of $\gamma_{j,n} > 0$ in Eq.~\ref{eq:langevin_equations_1D} determines the rate at which the associated Markov process converges to its stationary distribution if initialized away from it.
When $\gamma_{j,n} < 2 \omega_{j,n}$, the process is dominated by the deterministic Hamiltonian dynamics and is characterized as \textit{underdamped}; on the other hand, when $\gamma_{j,n} > 2 \omega_{j,n}$, the process is \textit{overdamped}; and at the critical value $\gamma_{j,n} = 2 \omega_{j,n}$ the process is characterized as \textit{critically damped} and converges to equilibrium fastest.\cite{Metafune2002,Pavliotis2014}
This analytical result motivates the so-called PILE friction schedule.\cite{Ceriotti2010,Rossi2014}
We specialize to this schedule in the remainder of the section and set $\gamma_{j,n} = 2 \omega_{j,n}$ for $1 \le j \le n-1$.

The BAOAB-like update in Section~\ref{trpmd} applied to Eq.~\ref{eq:langevin_equations_1D} can be written compactly as
\begin{equation} \label{eq:BGOGB_1D}
\begin{split}
    \begin{bmatrix} \varrho_j \\ \varphi_j \end{bmatrix}  \ \leftarrow \ \m{\mathcal{M}}_{j,n} \begin{bmatrix} \varrho_j \\ \varphi_j \end{bmatrix} + \m{\mathcal{R}}_{j,n}^{1/2} \begin{bmatrix} \xi_j \\ \eta_j \end{bmatrix}
    \\[0.5em] \text{for $0 \le j \le n-1$,}
\end{split}
\end{equation}
where $\xi_j$ and $\eta_j$ are independent standard normal random variables and we have introduced the $2 \times 2$ matrices
\begin{align*}
    \m{\mathcal{M}}_{j,n} \ &= \ \m{\mathcal{S}}_{j,n}^{1/2} \m{\mathcal{O}}_{j,n} \m{\mathcal{S}}_{j,n}^{1/2} \;, \quad \m{\mathcal{O}}_{j,n} \ = \ \begin{bmatrix} 
    1 & 0 \\
    0 & e^{-2 \omega_{j,n} \Delta t}
    \end{bmatrix} \; \text{and}
    \\
    \m{\mathcal{R}}_{j,n} \ &= \ \frac{1 - e^{-4 \omega_{j,n} \Delta t}}{\beta m_n} \m{\mathcal{S}}_{j,n}^{1/2} \begin{bmatrix}
    0 & 0 \\
    0 & 1
    \end{bmatrix} ( \m{\mathcal{S}}_{j,n}^{1/2} )^{\mathrm{T}} \;.
\end{align*}  
Since $\m{\mathcal{S}}_{j,n}^{1/2}$ and the Ornstein--Uhlenbeck update are individually preservative irrespective of the chosen $\theta$, Eq.~\ref{eq:BGOGB_1D} exactly preserves the free ring-polymer equilibrium distribution for any choice of $\theta$ that satisfies \eqref{eq:C1} and~\eqref{eq:C2}.

The ergodicity of the integrator specified by Eq.~\ref{eq:BGOGB_1D} depends entirely on the asymptotic stability of $\m{\mathcal{M}}_{j,n}$, i.e., whether or not $\| \m{\mathcal{M}}_{j,n}^k \| \to 0$ as $k \to \infty$ where $\| \cdot \|$ is a matrix norm.
The matrix $\m{\mathcal{M}}_{j,n}$ is asymptotically stable if its \textit{spectral radius} (i.e., the modulus of its largest eigenvalue) is smaller than unity,\cite{Arnol2013} which depends on
\begin{align*}
    \det(\m{\mathcal{M}}_{j,n}) &= e^{-2 \omega_{j,n} \Delta t} \quad \text{and} \\
    \tr(\m{\mathcal{M}}_{j,n}) &= \cos(\theta(\omega_{j,n} \Delta t)) (1 + e^{-2 \omega_{j,n} \Delta t}) \;.
\end{align*}
In particular, the eigenvalues of $\m{\mathcal{M}}_{j,n}$ are both inside the unit circle if and only if
\begin{equation*}
    | \tr(\m{\mathcal{M}}_{j,n}) | < 1 + \det(\m{\mathcal{M}}_{j,n}) < 2 \;;
\end{equation*}
a proof of this claim is provided in Section~\ref{eigs}.
This inequality reveals that condition~\eqref{eq:C2} implies property~(P2).
Moreover, if $\tr(\m{\mathcal{M}}_{j,n})^2 - 4 \det(\m{\mathcal{M}}_{j,n}) \le 0$, then the spectral radius of $\m{\mathcal{M}}_{j,n}$ is minimal and equal to $\sqrt{\det(\m{\mathcal{M}}_{j,n})} = e^{- \omega_{j,n} \Delta t}$; this occurs when $| \cos(\theta(\omega_{j,n} \Delta t)) | \le \sech(\omega_{j, n} \Delta t)$ for all $\omega_{j,n} \Delta t$, which holds if the function $\theta$ satisfies
\begin{equation} \label{eq:thetacrit}
\begin{split}
    \arccos(\sech(x)) \le \theta(x) \le \pi - \arccos(\sech(x)) \\
    \text{for $x > 0$.}
\end{split}
\end{equation}
Any choice of $\theta$ that does not satisfy Eq.~\ref{eq:thetacrit} will be overdamped in some modes, in the sense that the corresponding $\m{\mathcal{M}}_{j,n}$ will have a spectral radius strictly larger than $e^{- \omega_{j,n} \Delta t}$.

The function $\theta(x) = \arccos(\sech(x))$ saturates the (left) inequality in Eq.~\ref{eq:thetacrit} while satisfying conditions~\eqref{eq:C1} and~\eqref{eq:C2}, and hence provides a strongly stable and critically damped integrator for the thermostatted free ring polymer.
As illustration of this, Fig.~\ref{fig:theta_specrad}a shows that $\theta(x) = \arctan(x)$ is overdamped for all modes whereas the Cayley angle $\theta(x) = 2\arctan(x/2)$ exhibits mixed damping.
In contrast, the function $\theta(x) = \arccos(\sech(x))$ preserves the critically damped behavior of its continuous counterpart under the PILE friction schedule.
Figure~\ref{fig:theta_specrad}b confirms that the spectral radius of $\m{\mathcal{M}}_{j,n}$ is minimal at $\theta(x) = \arccos(\sech(x))$ for $x > 0$; consequently, this choice of $\theta$ optimizes the convergence of the integrator to stationarity.
 
Conditions~\eqref{eq:C1} and~\eqref{eq:C2} also imply property~(P2) for the OBABO-like update associated with a compliant choice of $\theta$, because the matrices $\m{\mathcal{S}}_{j,n}^{1/2} \m{\mathcal{O}}_{j,n} \m{\mathcal{S}}_{j,n}^{1/2}$ and $\m{\mathcal{O}}_{j,n}^{1/2} \m{\mathcal{S}}_{j,n} \m{\mathcal{O}}_{j,n}^{1/2}$ have equal spectral radii.

\begin{figure}
    \centering
    \includegraphics[width=0.85\columnwidth]{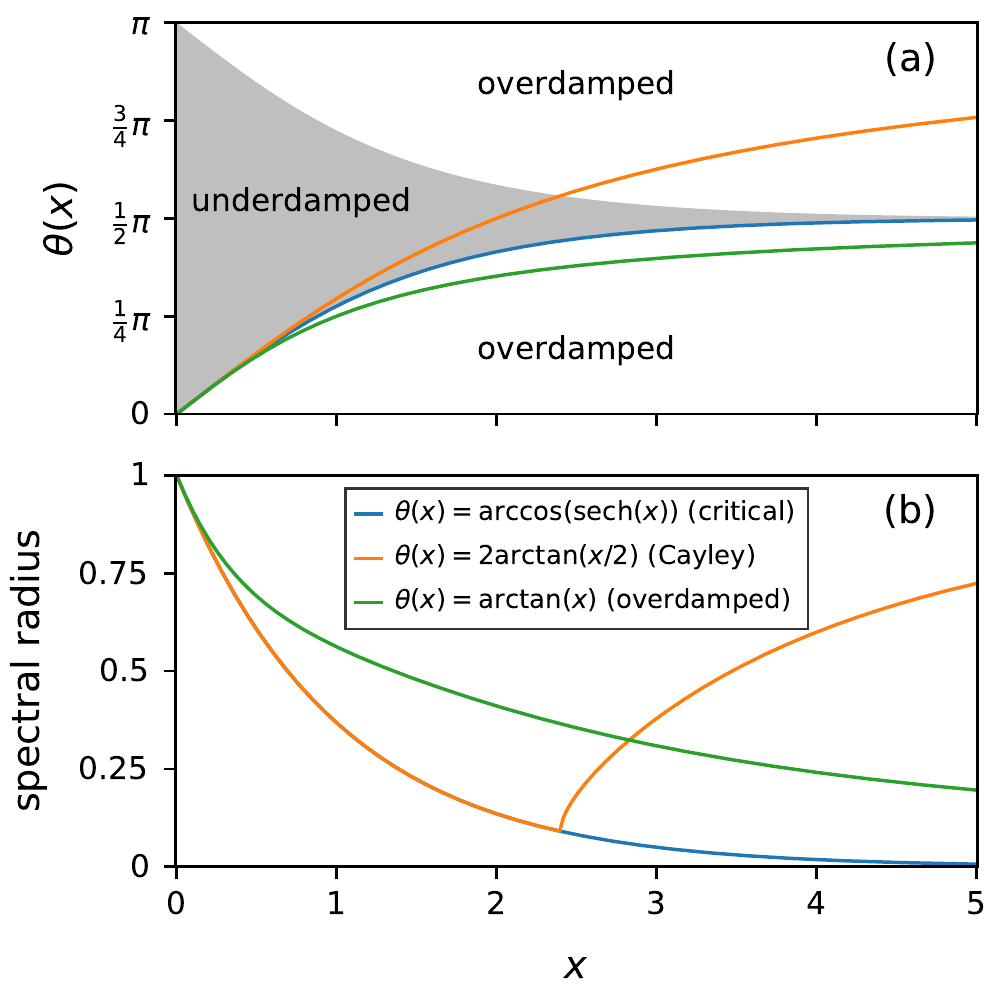}
    \caption{\small
    Spectral properties of the T-RPMD update for the free ring polymer for various choices of $\theta$.
    Panel~(a) plots the functions $\theta(x) = \arccos(\sech(x))$, $\arctan(x)$ and $\arctan(x/2)$, and regions of overdamping and underdamping with PILE friction, separated at the locus of points where $|\cos(\theta(x))\cosh(x)| = 1$.
    The gray region ($| \cos(\theta(x)) \cosh(x) | < 1$) is where the dynamics is underdamped, while in the white region ($| \cos(\theta(x)) \cosh(x) | > 1$) the dynamics is overdamped.
    The function $\theta(x) = \arctan(x)$ lies in the overdamped region for $x > 0$, whereas $\theta(x) = 2\arctan(x/2)$ is in the underdamped region for $x \lessapprox 2.4$ and in the overdamped region otherwise.
    The function $\theta(x) = \arccos(\sech(x))$, however, is critically damped for $x > 0$ and optimizes the convergence rate of the integrator.
    Panel~(b) plots the spectral radius of $\m{\mathcal{M}}_{j,n}$ corresponding to each choice of $\theta$ as a function of $x$.
    }
    \label{fig:theta_specrad}
\end{figure}

\subsection{Dimension-free stability of RPMD with a harmonic external potential}
\label{rpmd_harmonic_oscillator}

In this section, we identify a condition on $\theta$ that yields property~(P3) in Section~\ref{trpmd}.
Let $V(q) = (\Lambda/2) \, q^2$ and $\gamma_{j,n} = 0$ for $1 \le j \le n-1$, corresponding to the non-thermostatted ring polymer with a harmonic external potential.
In this case, the $j$th normal mode satisfies
\begin{equation} \label{eq:harmonic}
    \begin{bmatrix} \dot \varrho_j \\ \dot \varphi_j \end{bmatrix} = ( \m{A}_{j,n} + \m{B} ) \begin{bmatrix} \varrho_j \\ \varphi_j \end{bmatrix}
\end{equation}
where $\m{B} = \begin{bmatrix} 0 & 0 \\ -\Lambda/m & 0 \end{bmatrix}$, and conserves the Hamiltonian
\begin{equation*}
    H_{j,n}(\rho_j, \varphi_j) = \frac{m_n}{2} \big( |\varphi_j|^2 + (\omega_{j,n}^2 + \Lambda/m) |\varrho_j|^2 \big) \;.
\end{equation*}
For this system, the BAOAB-like update in Section~\ref{trpmd} reduces to
\begin{equation} \label{eq:BGB_1D}
    \begin{bmatrix} \varrho_j \\ \varphi_j \end{bmatrix}  \ \leftarrow  \ \m{\mathcal{M}}_{j,n} \begin{bmatrix} \varrho_j \\ \varphi_j \end{bmatrix}
    \quad \text{for $0 \le j \le n-1$,}
\end{equation}
where we have introduced the $2 \times 2$ matrices
\begin{equation*}
    \m{\mathcal{M}}_{j,n} = \m{\mathcal{B}}^{1/2} \m{\mathcal{S}}_{j,n} \m{\mathcal{B}}^{1/2} \;\; \text{and} \;\;
    \m{\mathcal{B}}^{1/2} = \begin{bmatrix} 
    1 & 0 \\
    -\Delta t (\Lambda/m) / 2 & 1
    \end{bmatrix} \;.
\end{equation*} 
This update may be interpreted as a symplectic perturbation of the free ring-polymer update in Eq.~\ref{eq:G_1D} due to the harmonic external potential,\cite{Korol2019} and conserves a modification of $H_{j,n}$ that depends on the choices of $\theta$ and $\Delta t$.\cite{SanzSerna1994}

The update in Eq.~\ref{eq:BGB_1D} is stable if\cite{BouRabee2018}
\begin{equation} \label{eq:tC4}
    \max_{0 \le j \le n-1} \frac{1}{2} |\tr(\m{\mathcal{M}}_{j,n})| = \max_{0 \le j \le n-1} | \mathcal{A}_{j,n} | < 1 \;,
\end{equation}
where
\begin{equation*}
    \mathcal{A}_{j,n} = \cos(\theta(\omega_{j,n} \Delta t)) - \frac{\Delta t^2 (\Lambda/m)}{2} \frac{\sin(\theta(\omega_{j,n} \Delta t))}{\omega_{j,n} \Delta t} \;.
\end{equation*}
Moreover, the $0$th (i.e., centroid) mode, like the single-bead ring polymer, evolves through the velocity Verlet algorithm, whose stability requires that $\Delta t^2 \Lambda / m < 4$.
Combining this requirement with condition~\eqref{eq:C2} yields a sufficient condition for Eq.~\ref{eq:tC4} to hold at any $n$,
\begin{equation} \label{eq:C3}
    0 < \theta(x) \le 2 \arctan(x/2) \quad \text{for $x > 0$.}
    \tag{C3}
\end{equation}
A proof of this result is provided in Section~\ref{ergodicity}.
The functions $\theta(x) = 2\arctan(x/2)$, $\arctan(x)$ and $\arccos(\sech(x))$ all satisfy condition \eqref{eq:C3}, which ensures that the corresponding RPMD integrator meets property~(P3).

\subsection{Dimension-free ergodicity and equilibrium accuracy of T-RPMD with a harmonic external potential}
\label{trpmd_harmonic_oscillator}

In this section, it is shown that condition~\eqref{eq:C3} implies property~(P4) in Section~\ref{trpmd}, and an additional condition is introduced to ensure that (P5) holds.
Let $V(q) = (\Lambda/2) \, q^2$ and $\gamma_{j,n} = 2\omega_{j,n}$ for $1 \le j \le n-1$.  In this case, the $j$th normal mode satisfies
\begin{equation} \label{eq:langevin_equations_1D_2}
    \begin{bmatrix} \dot \varrho_j \\ \dot \varphi_j \end{bmatrix} = ( \m{A}_{j,n} + \m{B} + \m{C}_{j,n} ) \begin{bmatrix} \varrho_j \\ \varphi_j \end{bmatrix}
    + \begin{bmatrix} 0 \\ \sqrt{ \frac{ 4 \omega_{j,n} }{ \beta m_n } }  \dot W_j \end{bmatrix} \;.
\end{equation}
The solution $\begin{bmatrix} \varrho_j(t) & \varphi_j(t) \end{bmatrix}^\mathrm{T}$ of Eq.~\ref{eq:langevin_equations_1D_2} is an ergodic Markov process, and its distribution as $t \to \infty$ converges to the centered bivariate normal with covariance matrix
\begin{equation} \label{eq:exact_IM_1D_2}
    \m{\Sigma}_{j,n} = \frac{1}{\beta m_n} \begin{bmatrix} s_{j,n}^2 & 0 \\ 0 & 1 \end{bmatrix} \;\; \text{where} \;\; s_{j,n}^2 = \frac{1}{\Lambda/m + \omega_{j,n}^{2}} \;;
\end{equation}
the associated position-marginal is the $j$th marginal of the ring-polymer configurational distribution with density $e^{- \frac{\beta m_n}{2} \vec{q}^\mathrm{T} \left( \frac{\Lambda}{m} + \m{\Omega}^2 \right) \vec{q} }$.

For this system, the BAOAB-like update in Section~\ref{trpmd} is of the same form as Eq.~\ref{eq:BGOGB_1D} with
\begin{equation} \label{eq:BGOGB_1D_HO}
\begin{aligned}
    \m{\mathcal{M}}_{j,n} &= \m{\mathcal{B}}^{1/2} \m{\mathcal{S}}_{j,n}^{1/2} \m{\mathcal{O}}_{j,n} \m{\mathcal{S}}_{j,n}^{1/2} \m{\mathcal{B}}^{1/2} \quad \text{and} 
    \\
    \m{\mathcal{R}}_{j,n} &= \frac{1-e^{-4 \omega_{j,n} \Delta t} }{\beta m_n} \m{\mathcal{B}}^{1/2} \m{\mathcal{S}}_{j,n}^{1/2} \begin{bmatrix} 0 & 0 \\ 0 & 1 \end{bmatrix} ( \m{\mathcal{B}}^{1/2} \m{\mathcal{S}}_{j,n}^{1/2} )^{\mathrm{T}} \;.
\end{aligned} 
\end{equation}
As in the case of a constant external potential, the ergodicity of this integrator depends on the spectral radius of $\m{\mathcal{M}}_{j,n}$.
By Theorem~\ref{thm:eigs} in Section~\ref{eigs} and the fact that
\begin{align*}
    \det(\m{\mathcal{M}}_{j,n})
    &= e^{-2 \omega_{j,n} \Delta t} \; \quad \text{and} \\
    \tr(\m{\mathcal{M}}_{j,n})
    &= \mathcal{A}_{j,n} (1 + e^{-2 \omega_{j,n} \Delta t}) \;,
\end{align*}
it follows that condition~\eqref{eq:C3} gives a simple and sufficient condition for ergodicity at any bead number $n$ and hence implies property~(P4) for the BAOAB-like update specified by Eqs.~\ref{eq:BGOGB_1D} and~\ref{eq:BGOGB_1D_HO}.
Furthermore, because the matrix $\m{\mathcal{M}}_{j,n}$ of the corresponding OBABO-like update has equal trace and determinant, condition~\eqref{eq:C3} also guarantees property~(P4) in that case.\footnote{%
Condition~\eqref{eq:C3} may be viewed as a relaxation of the sufficient condition for ergodicity given in Eq.~(18) of Ref.~\onlinecite{Korol2020}.
Indeed, condition~\eqref{eq:C3} implies ergodicity irrespective of the Ornstein--Uhlenbeck friction schedule, whereas Eq.~(18) in Ref.~\onlinecite{Korol2020} does not imply ergodicity for friction schedules that lead to overdamped dynamics.
}

If condition~\eqref{eq:C3} holds, the BAOAB-like update is ergodic with respect to a centered bivariate normal distribution whose covariance matrix $\m{\Sigma}_{j, \Delta t}$ satisfies the linear equation
\begin{equation} \label{eq:sigma_j_dt}
    \m{\Sigma}_{j, \Delta t} = \m{\mathcal{M}}_{j,n} \m{\Sigma}_{j, \Delta t} \m{\mathcal{M}}_{j,n}^{\mathrm{T}} + \m{\mathcal{R}}_{j,n} \;,
\end{equation}
for which the solution is
\begin{equation} \label{eq:numerical_IM_1D_a}
    \m{\Sigma}_{j, \Delta t} = \frac{1}{\beta m_n} \begin{bmatrix} s_{j, \Delta t}^2  & 0 \\ 0 & r_{j, \Delta t}^2 \end{bmatrix}
\end{equation}
where the variance in the position- and velocity-marginal is respectively $(\beta m_n)^{-1} s_{j, \Delta t}^2$ and $(\beta m_n)^{-1} r_{j, \Delta t}^2$ with
\begin{equation} \label{eq:numerical_IM_1D_b}
\begin{aligned}
    s_{j, \Delta t}^2 &=
    \left( \omega_{j,n}^2 + \frac{\Lambda}{m} \frac{ \omega_{j,n} \Delta t/2 }{ \tan\left(\theta(\omega_{j,n} \Delta t) /2 \right) } \right)^{-1} \;\; \text{and}
    \\
    r_{j, \Delta t}^2 &=
    1 - \frac{\Delta t^2 \Lambda}{4 m} \frac{ \tan\left(\theta(\omega_{j,n} \Delta t)/2 \right) }{ \omega_{j,n} \Delta t/2 } \;.
\end{aligned}
\end{equation}
Because the tangent function is monotonically increasing on the range of $\theta$ specified by condition~\eqref{eq:C3}, we have the
correspondence
\begin{equation} \label{eq:exact_numerical_IM_1D_comparison}
    0 < s_{j, \Delta t}^2 \le s_{j}^2 \quad\text{and}\quad
    1 - \frac{ \Delta t^2 \Lambda }{ 4 m } \le r_{j, \Delta t}^2 < 1
\end{equation}
between the exact and numerical variances of the $j$th ring-polymer mode.
Equation~\ref{eq:numerical_IM_1D_b} reveals that $\theta(x) = 2\arctan(x/2)$ is the unique function that complies with condition~\eqref{eq:C3} and saturates the inequality $s_{j, \Delta t}^2 \le s_{j}^2$ in Eq.~\ref{eq:exact_numerical_IM_1D_comparison}; consequently, the corresponding BAOAB-like scheme preserves the exact position-marginal in all modes and trivially satisfies property~(P5).
The BCOCB integrator from Ref.~\onlinecite{Korol2020} corresponds to this choice of $\theta$ and thus uniquely provides optimal equilibrium position-marginal accuracy for harmonic external potentials.

To identify other BAOAB-like schemes compliant with condition~\eqref{eq:C3} that satisfy property~(P5), we examine the overlap between the numerical stationary position-marginal distribution $\mu_{n, \Delta t}$ and the exact distribution $\mu_{n}$ where
\begin{equation*}
    \mu_{n} = \prod_{j=1}^{n-1} \mathcal{N} \! \left( 0, \tfrac{s_{j}^2}{\beta m_n} \right) \quad \text{and} \quad
    \mu_{n, \Delta t} = \prod_{j=1}^{n-1} \mathcal{N} \! \left( 0, \tfrac{s_{j, \Delta t}^2}{\beta m_n} \right) \;.
\end{equation*}
Centroid-mode marginals have been suppressed in the definitions of $\mu_{n}$ and $\mu_{n, \Delta t}$.
A BAOAB-like scheme is dimension-free if it admits an $n$-independent upper bound on the distance $d_\mathrm{TV}(\mu_{n}, \mu_{n, \Delta t})$ between $\mu_{n}$ and $\mu_{n, \Delta t}$, where $d_\mathrm{TV}$ is the total variation metric.\cite{Gibbs2002}
In particular, if we require
\begin{equation} \label{eq:C4}
    \frac{x}{1+|x|} \le \theta(x) \le 2 \arctan(x/2) \quad \text{for $x > 0$,}
    \tag{C4}
\end{equation}
then we have the dimension-free bound
\begin{equation} \label{eq:tv_error}
    d_{\mathrm{TV}}(\mu_{n}, \mu_{n, \Delta t}) < \left( \sqrt{\frac{4}{3}} \frac{\hbar \beta}{\Delta t} \right) \frac{\Delta t^2 \Lambda}{m} \;.
\end{equation}
A proof of this claim is provided in Section~\ref{tv_error_proof}.
Condition~\eqref{eq:C4} ensures that any BAOAB-like integrator with a compliant choice of $\theta$ meets property~(P5).

For OBABO-like schemes, the bound in condition~\eqref{eq:C4} must be tightened to guarantee non-zero overlap between $\mu_{n}$ and $\mu_{n, \Delta t}$ for arbitrarily large $n$.
In particular, replacing $2\arctan(x/2)$ with $\min \{2\arctan(x/2), C\}$ for some $C \in (0, \pi)$ in the upper bound of condition~\eqref{eq:C4} yields a $n$-independent bound on $d_\mathrm{TV}(\mu_{n}, \mu_{n, \Delta t})$ for all compliant OBABO-like integrators, as
can be shown through arguments similar to those in Section~\ref{tv_error_proof}.

Jointly, conditions~\eqref{eq:C1}-\eqref{eq:C4} specify a family of BAOAB-like schemes with dimension-free stability, ergodicity and equilibrium accuracy for applications with harmonic external potentials.
Numerical results in Section~\ref{numerical_results} suggest that the integrators exhibit similar properties in a more realistic setting with a strongly anharmonic external potential.

\subsection{Dimension-free convergence to equilibrium of T-RPMD with a harmonic external potential}
\label{trpmd_harmonic_oscillator_convergence}

Beyond ensuring ergodicity of the T-RPMD update in Eq.~\ref{eq:BGOGB_1D_HO}, condition~\eqref{eq:C3} leads to explicit dimension-free equilibration rates for compliant schemes.
Theorem~\ref{thm:contr} in Section~\ref{contr} proves this result in the infinite-friction limit for ring-polymer modes with arbitrarily high frequency.
In detail, the theorem shows that the configurational transition kernel associated with the T-RPMD update of the $j$th mode in Eq.~\ref{eq:BGOGB_1D_HO} is contractive in the $2$-Wasserstein metric\cite{Villani2008} and equilibrates any given initial distribution at a rate determined by the function $\theta$, the (external) potential curvature $\Lambda$, and the (stable) time-step $\Delta t$ if condition~\eqref{eq:C3} holds.
The rate in Theorem~\ref{thm:contr}, though obtained in the infinite-friction limit, holds for finite friction coefficients $\gamma_{j,n}$ leading to spectral radii $\rho(\m{\mathcal{M}}_{j,n}) \le |\mathcal{A}_{j,n}|$, where $\mathcal{A}_{j,n}$ is defined in the display after Eq.~\ref{eq:tC4} and $|\mathcal{A}_{j,n}| = \lim\nolimits_{\gamma_{j,n} \to \infty} \rho(\m{\mathcal{M}}_{j,n})$ is the spectral radius at infinite damping.

\begin{figure}
    \centering
    \includegraphics[width=0.85\columnwidth]{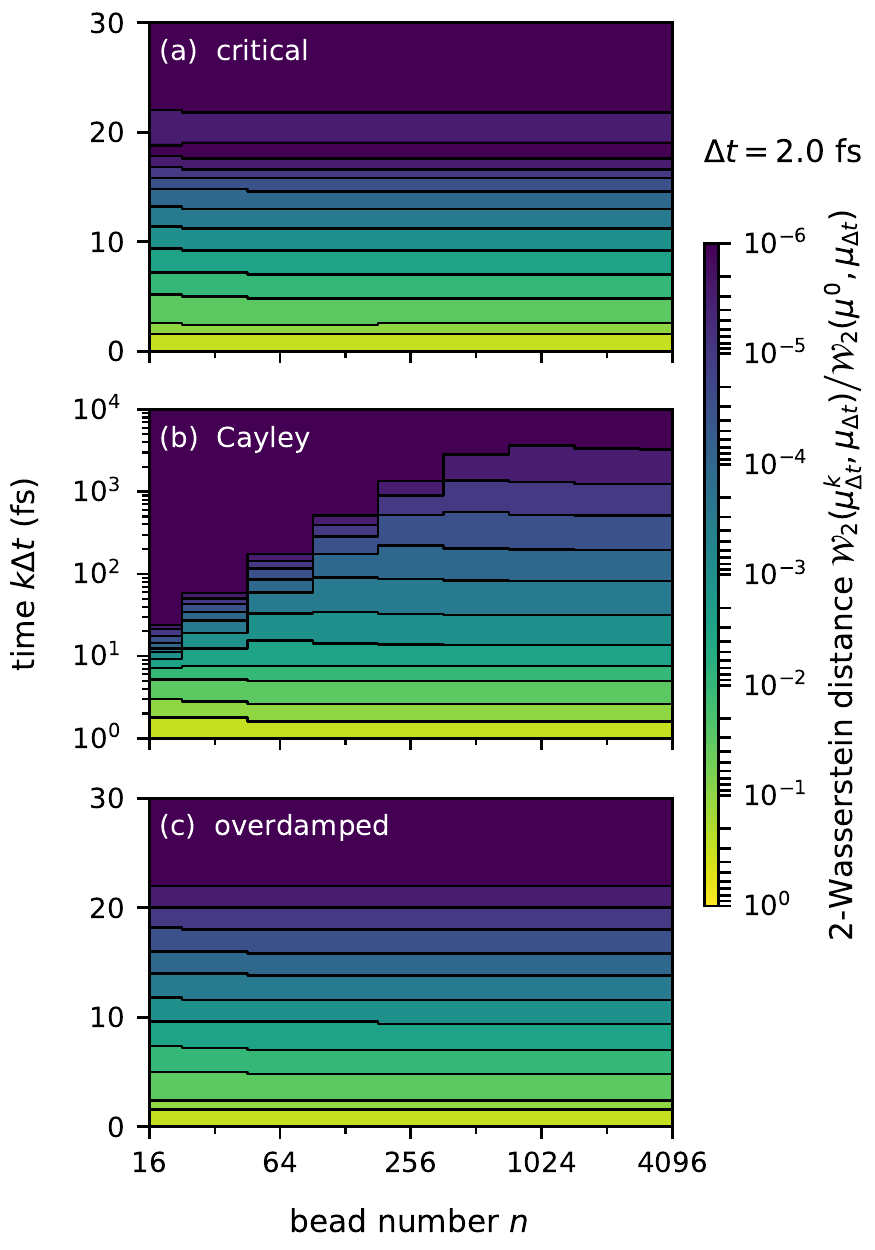}
    \caption{\small
    Dimension-free convergence to equilibrium of BAOAB-like T-RPMD schemes with a harmonic external potential.
    The physical parameters of the ring-polymer system (i.e., $\Lambda$, $m$, and $\beta$) are listed in Section~\ref{harmonic_oscillator_results}.
    Panels~(a),~(b) and~(c) plot the normalized $2$-Wasserstein distance between the configurational ring-polymer distribution at stationarity and at time $k \Delta t$, as evolved via various BAOAB-like schemes from an initial point-mass distribution.
    Regions with darker color indicate smaller $2$-Wasserstein distance to stationarity, and black lines mark iso-distance contours.
    The contours plateau at some value of $n$ for all tested schemes, which checks that they exhibit dimension-free convergence as predicted by Theorem~\ref{thm:contr}.
    }
    \label{fig:dfrate_contour}
\end{figure}

To illustrate dimension-free convergence, Fig.~\ref{fig:dfrate_contour} plots the $2$-Wasserstein distance between the stationary configurational (i.e., position-marginal) distribution $\mu_{n, \Delta t}$ and the distribution $\mu_{n, \Delta t}^k$ at the $k$th T-RPMD step evolved from a point mass at the origin using the schemes specified by $\theta(x) = \arccos(\sech(x))$ (Fig.~\ref{fig:dfrate_contour}a), $2\arctan(x/2)$ (Fig.~\ref{fig:dfrate_contour}b), and $\arctan(x)$ (Fig.~\ref{fig:dfrate_contour}c) for a range of bead numbers $n$.
These choices of $\theta$ respectively lead to \textit{overdamped}, \textit{critical}, and \textit{Cayley} evolution of the thermostatted free ring polymer under PILE damping (see Section~\ref{trpmd_free_particle}), and are identified accordingly in Fig.~\ref{fig:dfrate_contour}. 
The ring-polymer system considered in Fig.~\ref{fig:dfrate_contour} approximates the O--H stretch dynamics in liquid water at room temperature with the parameters listed in Section~\ref{harmonic_oscillator_results}.
Velocity-marginals were initialized as in the setting of Theorem~\ref{thm:contr} (see Section~\ref{contr}), and the position of the $j$th ring-polymer mode at time $k \Delta t$ follows a centered normal distribution with variance $(\beta m_n)^{-1} (s_{j, \Delta t}^{k})^2$, where
\begin{equation*}
\begin{split}
    (s_{j, \Delta t}^{k})^2 = (\m{\mathcal{M}}_{j,n}^k)_{12}^2 + \beta m_n \sum_{\ell=0}^{k-1} \big( \m{\mathcal{M}}_{j,n}^{\ell} \m{\mathcal{R}}_{j,n} (\m{\mathcal{M}}_{j,n}^{\ell})^\mathrm{T} \big)_{11} \\ \text{for $k > 0$.}
\end{split}
\end{equation*}
The $2$-Wasserstein distances in Fig.~\ref{fig:dfrate_contour} were evaluated using a well-known analytical result for multivariate normal distributions.\cite{Givens1984}

Figures~\ref{fig:dfrate_contour}a and~\ref{fig:dfrate_contour}c clearly show that the critical and overdamped schemes converge at dimension-free rates, but this is less evident from Fig.~\ref{fig:dfrate_contour}b for the Cayley scheme.
The latter scheme nonetheless displays an $n$-independent, and hence dimension-free, distance to stationarity at all times $k \Delta t > 0$, indicated by plateauing of the contour lines towards the right of Fig.~\ref{fig:dfrate_contour}b.
The ladder-like pattern that precedes this plateau illustrates a transition from geometric (i.e., fast) to sub-geometric (i.e., slow) convergence upon introducing higher-frequency modes into the ring polymer.
The transition manifests with the Cayley scheme because of its aggressive overdamping of the high-frequency modes, which is absent in the other two schemes (see Fig.~\ref{fig:theta_specrad}).

The example considered in this section illustrates that the equilibration timescale (e.g., the time until the $2$-Wasserstein distance decays below $10^{-6}$) of the Cayley scheme at large $n$ can dramatically exceed that of other BAOAB-like schemes.
Although this negative feature may render the scheme impractical for pathological applications, we find in the next section that the Cayley scheme's superior configurational sampling provides compelling justification for its preferred use in realistic settings.

\section{Numerical results}
\label{numerical_results}

The current section provides numerical comparisons of the BAOAB-like T-RPMD integrators in Section~\ref{theory}, on applications featuring harmonic (Section~\ref{harmonic_oscillator_results}) and anharmonic (Section~\ref{liquid_water_results}) external potentials.
Three representative choices of $\theta$ are considered in the numerical comparisons, namely $\theta(x) = \arctan(x)$, $\arccos(\sech(x))$, and $2\arctan(x/2)$.
These choices respectively lead to \textit{overdamped}, \textit{critical}, and \textit{Cayley} evolution of the thermostatted free ring polymer under PILE damping (Section~\ref{trpmd_free_particle}), and are identified accordingly throughout the current section.
It is borne out from the numerical comparisons that the Cayley scheme exhibits superior configurational sampling among the tested schemes in both applications.\footnote{%
The trajectory data used to produce the figures in this section (via the protocols in Section~\ref{calc_details}) is available from the corresponding author upon reasonable request.
}

\subsection{One-dimensional quantum harmonic oscillator}
\label{harmonic_oscillator_results}

In the current section, we numerically integrate Eq.~\ref{eq:pile} with the harmonic potential $V(q) = (\Lambda/2) \, q^2$ using PILE friction (i.e., $\m{\Gamma} = 2 \m{\Omega}$), $m = 0.95 \textrm{ amu}$, $\sqrt{\Lambda/m} = 3886 \textrm{ cm}^{-1}$, and $T = 298 \textrm{ K}$.
This choice of physical parameters corresponds to a harmonic approximation of the Morse contribution to the O--H bond potential in the q-TIP4P/F force field for water,\cite{Habershon2009} and sets a least upper bound for the T-RPMD stability interval at $\Delta t^\textrm{max} = 2 / \sqrt{\Lambda/m} = 2.74 \textrm{ fs}$.
The simulations reported throughout this section employ the time-step $\Delta t = 0.73 \times \Delta t^\textrm{max} = 2.00 \textrm{ fs}$.

\begin{figure}
    \centering
    \includegraphics[width=\columnwidth]{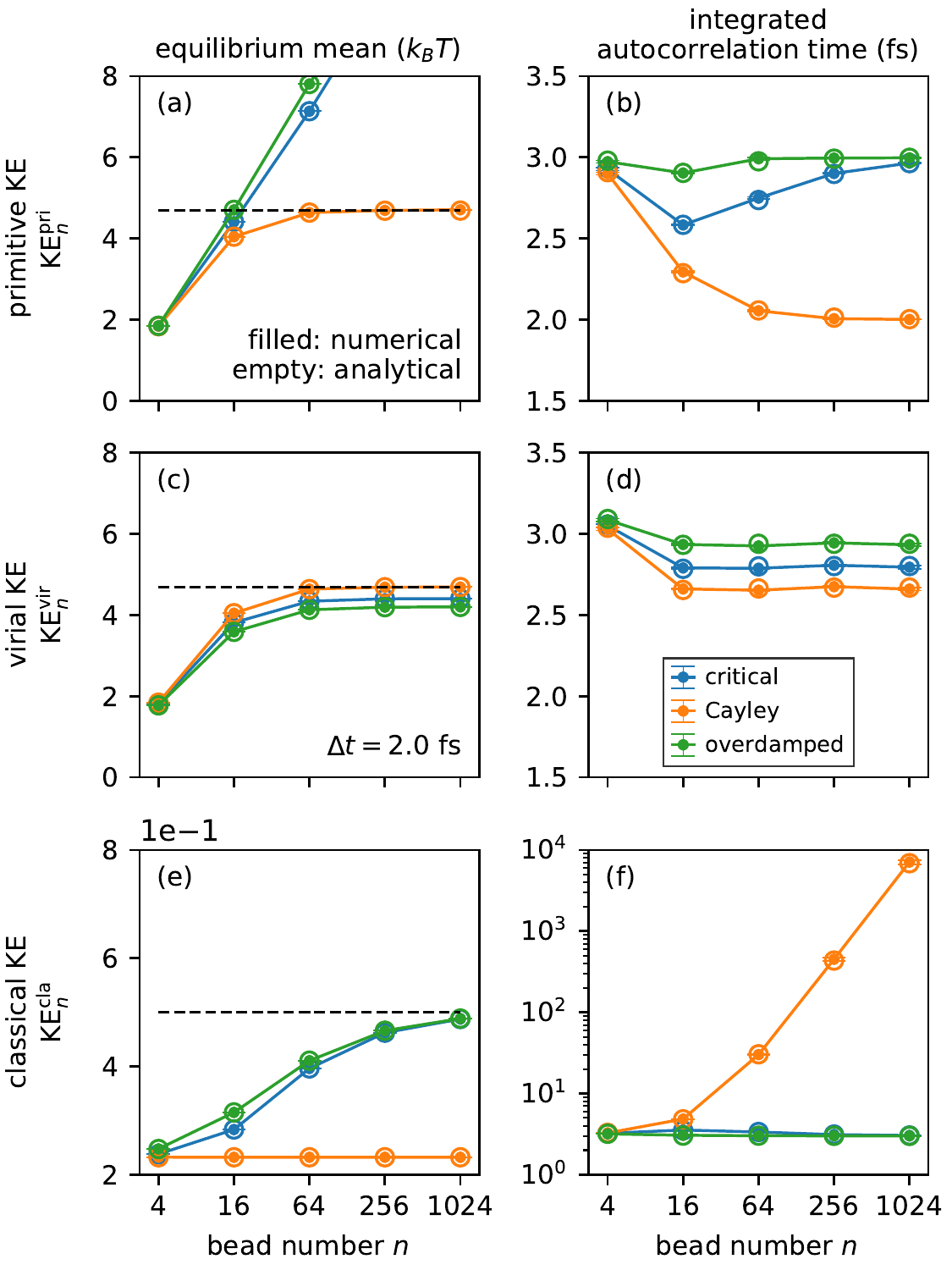}
    \caption{\small
    Performance at equilibrium of various BAOAB-like T-RPMD schemes applied to the one-dimensional quantum harmonic oscillator with physical parameters listed in Section~\ref{harmonic_oscillator_results}.
    Panels (a), (c) and (e), respectively, plot the equilibrium mean primitive kinetic energy, virial kinetic energy, and non-centroid classical kinetic energy per mode as a function of bead number $n$; the corresponding means in the exact infinite bead limit are plotted as dashed lines.
    Panels (b), (d) and (f), plot the integrated autocorrelation times (Eq.~\ref{eq:obs_iact}) of the respective observables.
    Exact (resp.\ numerically estimated) values of the plotted quantities are shown with empty (resp.\ filled) circles.
    Numerical estimates were obtained using the protocol described in Section~\ref{calc_details}.
    }
    \label{fig:harmonic_oscillator_equilibrium}
\end{figure}

Figure~\ref{fig:harmonic_oscillator_equilibrium} compares the accuracy and efficiency of various BAOAB-like T-RPMD schemes at equilibrium as a function of the bead number $n$.
For a description of the numerical simulation and statistical estimation procedures used to generate the numerical data (filled circles) in Fig.~\ref{fig:harmonic_oscillator_equilibrium}, the reader is referred to Section~\ref{calc_details}.
Figures~\ref{fig:harmonic_oscillator_equilibrium}a and~\ref{fig:harmonic_oscillator_equilibrium}c report the mean quantum kinetic energy at equilibrium as per the primitive and virial estimators,
\begin{equation} \label{eq:qke}
\begin{aligned}
    \mathsf{KE}^\mathrm{pri}_n(\vec{q}) &= \frac{n}{2 \beta} - \sum_{j=0}^{n-1} \frac{m_n \omega_n^2}{2} \, (q_{j+1} - q_j)^2 \;\; \text{and} \\
    \mathsf{KE}^\mathrm{vir}_n(\vec{q}) &= \frac{1}{2 \beta} + \frac{1}{2} \sum_{j=0}^{n-1} (q_j - \overline{q}) \, \partial_{q_j} V_n^\textrm{ext}(\vec{q}) \;,
\end{aligned}
\end{equation}
where $\overline{q} = \frac{1}{n} \sum_{j=0}^{n-1} q_j$ is the centroid position of the $n$-bead ring polymer.
For these two observables, Figs.~\ref{fig:harmonic_oscillator_equilibrium}b and~\ref{fig:harmonic_oscillator_equilibrium}d quantify the equilibrium sampling efficiency of the schemes in terms of the integrated autocorrelation time (or normalized asymptotic variance)\cite{Geyer1992,Sokal1997,Asmussen2007,Skeel2017,Fang2017}
\begin{align} \label{eq:obs_iact}
    \frac{\mathrm{aVar}(\mathsf{O}_n)}{\mathrm{Var}(\mathsf{O}_n)}
    &= \frac{ \lim_{K \to \infty} \mathrm{Var} \Big( \frac{1}{\sqrt{K}} \sum\nolimits_{k=0}^{K-1} \mathsf{O}_n(\vec{\xi}^{(k \Delta t)}) \Big) }{ \mathrm{Var} \big( \mathsf{O}_n \big) } \nonumber \\
    &= 1 + 2 \sum_{k=1}^\infty \mathrm{Cor} \big( \mathsf{O}_n(\vec{\xi}^{(0)}), \mathsf{O}_n(\vec{\xi}^{(k \Delta t)}) \big) \;,
\end{align}
where $\mathsf{O}_n$ is an $n$-bead observable, $\{ \vec{\xi}^{(k \Delta t)} \}_{k=0}^\infty = \{ (\vec{q}^{(k \Delta t)}, \vec{v}^{(k \Delta t)}) \}_{k=0}^\infty$ a stationary T-RPMD trajectory, $\mathrm{Var}(\mathsf{O}_n)$ the variance of $\mathsf{O}_n$ at equilibrium, and $\mathrm{Cor}(\mathsf{O}_n(\vec{\xi}^{(0)}), \mathsf{O}_n(\vec{\xi}^{(k \Delta t)}))$ the lag-$k \Delta t$ autocorrelation of $\mathsf{O}_n$ along the T-RPMD trajectory.
The integrated autocorrelation time of $\mathsf{O}_n$ is interpreted as the timescale over which adjacent observations along an equilibrium trajectory become statistically uncorrelated\cite{Geyer1992,Sokal1997,Asmussen2007,Skeel2017,Fang2017} and is hence a measure of the efficiency of a T-RPMD scheme at estimating the mean of $\mathsf{O}_n$ with respect to the numerically sampled equilibrium distribution.
Figures~\ref{fig:harmonic_oscillator_equilibrium}a-d show that the scheme specified by the Cayley angle (orange) outperforms others in terms of both accuracy and efficiency at estimating the equilibrium average of the quantum kinetic energy observables.

From the perspective of configurational accuracy, the optimality of the Cayley angle displayed in Figs.~\ref{fig:harmonic_oscillator_equilibrium}a and~\ref{fig:harmonic_oscillator_equilibrium}c is not surprising in light of the findings in Section~\ref{trpmd_harmonic_oscillator}.
Less expected are the results in Figs.~\ref{fig:harmonic_oscillator_equilibrium}b and~\ref{fig:harmonic_oscillator_equilibrium}d, which suggest that the Cayley angle  is also optimal from the standpoint of configurational sampling efficiency for the quantum kinetic energy observables in Eq.~\ref{eq:qke}.
Section~\ref{analytical_asymptotic_variance} supports this conjecture with an analytical result for harmonic external potentials.

Figure~\ref{fig:harmonic_oscillator_equilibrium}e plots the mean classical kinetic energy at equilibrium as computed from the non-centroid ring-polymer velocities,
\begin{equation}\label{eq:cke}
    \mathsf{KE}_n^\mathrm{cla}(\vec{v}) = \frac{m_n}{2(n-1)} \sum_{j=0}^{n-1} (v_j^2 - \overline{v}^2) \approx \frac{1}{2 \beta} \;,
\end{equation}
and Fig.~\ref{fig:harmonic_oscillator_equilibrium}f plots the corresponding integrated autocorrelation time as given by Eq.~\ref{eq:obs_iact}.
For this observable, the equilibrium accuracy and efficiency of the Cayley scheme are significantly worse than those of the others as $n$ increases.
This is a consequence of the strongly overdamped behavior of Cayley T-RPMD at high frequencies (see Fig.~\ref{fig:theta_specrad}), for which the integrator's ergodicity degrades as its spectral radius approaches unity.
Note that this shortcoming of the Cayley scheme presents no adverse implications to the equilibrium sampling of observables that exclusively depend on the ring-polymer configuration, as confirmed by Figs.~\ref{fig:harmonic_oscillator_equilibrium}a-d.

In summary, Fig.~\ref{fig:harmonic_oscillator_equilibrium} establishes that the T-RPMD scheme specified by the Cayley angle provides optimally accurate and efficient configurational sampling \textit{at} equilibrium.
To exploit this remarkable feature in practice, the scheme must manifest rapid converge \textit{to} equilibrium when initialized away from it, as is necessary in most realistic applications of T-RPMD.
Fortunately, Theorem~\ref{thm:contr} guarantees that any BAOAB-like scheme compliant with conditions~\eqref{eq:C1}-\eqref{eq:C4} features a contractive configurational transition kernel for any number of ring-polymer beads, and Fig.~\ref{fig:dfrate_contour} in Section~\ref{trpmd_harmonic_oscillator_convergence} illustrates this fact  for the quantum harmonic oscillator considered in the current section.

\begin{figure}
    \centering
    \includegraphics[width=\columnwidth]{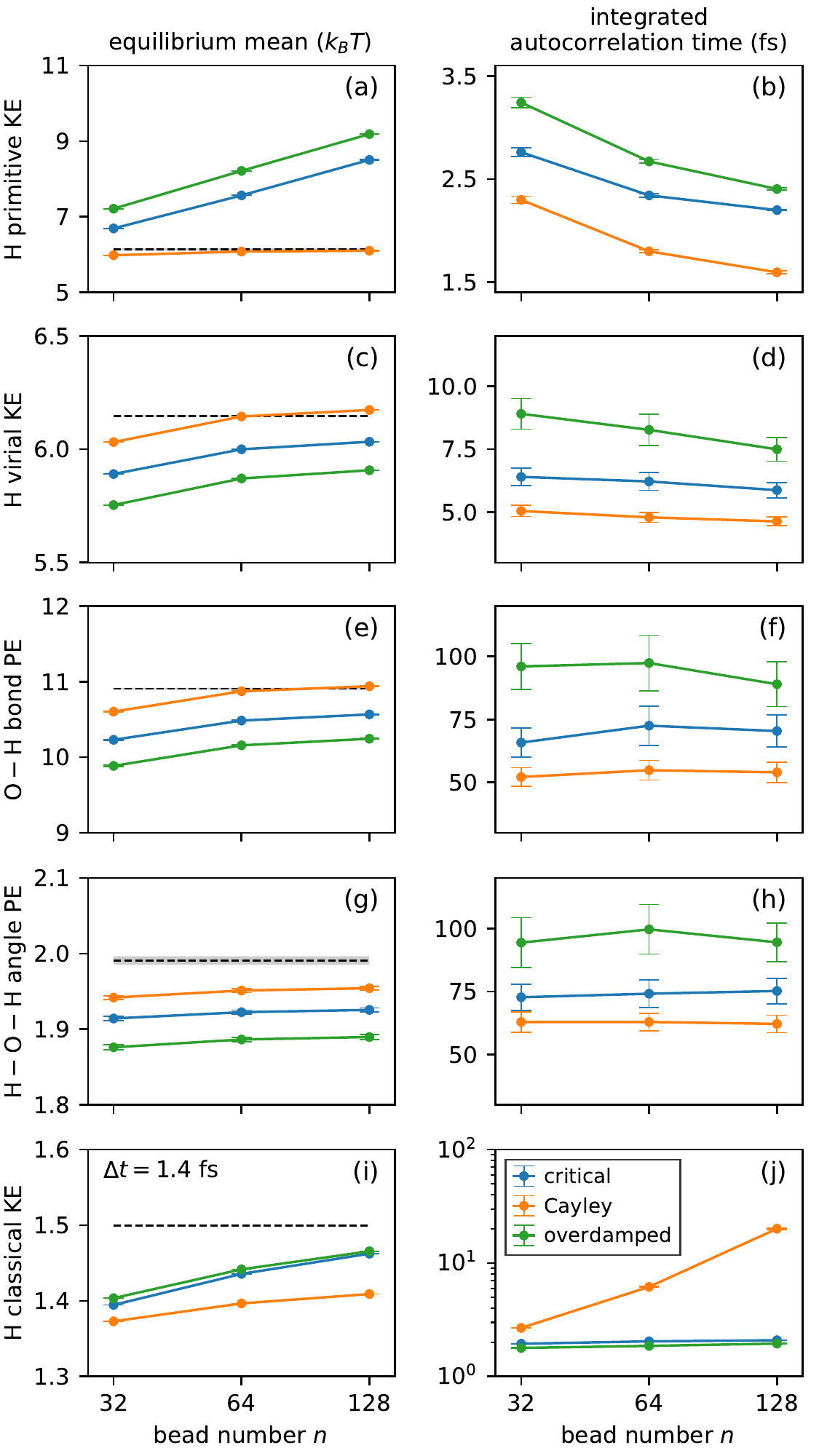}
    \caption{\small 
    Performance of various BAOAB-like T-RPMD schemes applied to q-TIP4P/F liquid water at room temperature.
    As a function of the bead number $n$ and for a $1.4$-fs time-step, panels~(a) and~(c) plot the equilibrium kinetic energy per $\mathrm{H}$ atom as per the primitive and virial estimators (Eq.~\ref{eq:qke}), and panels~(b) and~(d) plot the corresponding integrated autocorrelation times.
    Likewise, panels~(e) and~(g) plot the equilibrium potential energy per $\mathrm{H}_2\mathrm{O}$ molecule due to the $\mathrm{O}\!-\!\mathrm{H}$-stretch and $\mathrm{H}\!-\!\mathrm{O}\!-\!\mathrm{H}$-bend contributions, as defined in the q-TIP4P/F force field,\cite{Habershon2009} and the corresponding autocorrelation times are plotted by panels~(f) and~(h).
    Finally, panel~(i) plots the classical kinetic energy per $\mathrm{H}$ atom computed from the non-centroid velocity estimator (Eq.~\ref{eq:cke}), and panel~(j) plots the corresponding autocorrelation time.
    The numerical estimates and reference results (dashed lines) were obtained using the protocols described in Section~\ref{calc_details}.
    }
    \label{fig:liquid_water_equilibrium_all}
\end{figure}

\subsection{Room-temperature liquid water}
\label{liquid_water_results}

While theoretical analysis and numerical tests of BAOAB-like T-RPMD schemes in previous sections have focused on harmonic external potentials, the current section demonstrates that the resulting insights carry over to a realistic, strongly anharmonic model of room-temperature liquid water.
Our test system is a periodic box containing $32$ water molecules at a temperature of $298 \textrm{ K}$ and a density of $0.998 \textrm{ g/cm$^3$}$, with potential energy described by the q-TIP4P/F force field.\cite{Habershon2009}
As in Section~\ref{harmonic_oscillator_results}, we compare the performance of various BAOAB-like T-RPMD schemes for integrating the many-dimensional analogue of Eq.~\ref{eq:pile} with PILE friction, using the simulation time-step $\Delta t = 1.4 \textrm{ fs}$ in all simulations.
Numerical tests reported in Section~\ref{liquid_water_stability} show that this value of $\Delta t$ closely approximates the upper limit of the Verlet (i.e., $n = 1$) stability interval for q-TIP4P/F liquid water.
In agreement with Section~\ref{harmonic_oscillator_results}, the experiments reveal that among the tested T-RPMD schemes, the Cayley scheme offers superior configurational sampling.
For details on the numerical simulation and statistical estimation procedures used to generate the data presented in this section, the reader is referred to Section~\ref{calc_details}.

Figure~\ref{fig:liquid_water_equilibrium_all} compares the equilibrium accuracy achieved by the tested schemes in terms of the quantum and classical kinetic energy per hydrogen atom (Figs.~\ref{fig:liquid_water_equilibrium_all}a, \ref{fig:liquid_water_equilibrium_all}c, and~\ref{fig:liquid_water_equilibrium_all}i) and the intramolecular potential energy per water molecule (Figs.~\ref{fig:liquid_water_equilibrium_all}e and~\ref{fig:liquid_water_equilibrium_all}g); also plotted are the respective integrated autocorrelation times as a function of bead number $n$.
The kinetic energy estimates in Figs.~\ref{fig:liquid_water_equilibrium_all}a and~\ref{fig:liquid_water_equilibrium_all}c exhibit similar trends to those seen in Fig.~\ref{fig:harmonic_oscillator_equilibrium} for the one-dimensional harmonic oscillator.
In particular, the T-RPMD scheme specified by the Cayley angle outperforms others in terms of quantum kinetic energy accuracy as $n$ increases, most outstandingly with a highly accurate primitive kinetic energy estimate despite the large time-step employed.
Still in close agreement with the harmonic oscillator results, Figs.~\ref{fig:liquid_water_equilibrium_all}b and~\ref{fig:liquid_water_equilibrium_all}d show that the Cayley scheme displays the shortest integrated autocorrelation time among the tested schemes for the quantum kinetic energy observables.
Similar trends manifest in the intramolecular potential energy averages and their autocorrelation times (Figs.~\ref{fig:liquid_water_equilibrium_all}e-h), where the Cayley scheme also achieves superior accuracy and efficiency.
Finally, Figs.~\ref{fig:liquid_water_equilibrium_all}i and~\ref{fig:liquid_water_equilibrium_all}j confirm that the relative performance of the compared schemes in terms of velocity-marginal sampling is qualitatively consistent with the harmonic results.
Taken together, the results in Fig.~\ref{fig:liquid_water_equilibrium_all} suggest that the superiority of the Cayley scheme for configurational sampling, proven in the model setting of a harmonic external potential, is also reflected in realistic applications.

In a final numerical test, Fig.~\ref{fig:liquid_water_equilibrium_spec} confirms that the sampling advantages of the Cayley T-RPMD scheme are obtained without downside in the estimation of dynamical quantities of typical interest. 
Specifically, Fig.~\ref{fig:liquid_water_equilibrium_spec}b shows (unnormalized) infrared absorption spectra for room-temperature liquid water, computed from the $128$-bead T-RPMD trajectories used to generate Fig.~\ref{fig:liquid_water_equilibrium_all} using linear response theory and the T-RPMD approximation to real-time quantum dynamics.\cite{Habershon2008,Rossi2014}
Linear response dictates that the absorption spectrum is proportional to $\omega^2 \tilde{\mathcal{I}}(\omega)$, where $\tilde{\mathcal{I}}(\omega) = \int_\mathbb{R} \mathrm{d}t \, e^{-i \omega t} \tilde{C}_{\mu\mu}(t)$ is the Fourier transform of the quantum-mechanical Kubo-transformed dipole autocorrelation function $\tilde{C}_{\mu\mu}(t)$.
The latter is approximated within the T-RPMD framework\cite{Craig2004,Miller2005} by $\tilde{C}_{\mu\mu}(t) \approx \frac{1}{N_\text{H$_2$O}} \sum_{i=1}^{N_\text{H$_2$O}} \mathbb{E} \left( \overline{\mu}_i(t) \cdot \overline{\mu}_i(0) \right)$, where $N_{\text{H$_2$O}}$ is the number of molecules in the liquid, $\overline{\mu}_i(t)$ is the bead-averaged dipole moment of molecule $i$ at time $t$, and the covariance $\mathbb{E} \left( \overline{\mu}_i(t) \cdot \overline{\mu}_i(0) \right)$ is estimated from a stationary T-RPMD trajectory as indicated in Section~\ref{calc_details}.
Figure~\ref{fig:liquid_water_equilibrium_spec}a plots the T-RPMD estimates of $\tilde{C}_{\mu\mu}(t)$ leading to the absorption spectra in Fig.~\ref{fig:liquid_water_equilibrium_spec}b.
On the scale in which the absorption spectrum exhibits its key features, the spectra in Fig.~\ref{fig:liquid_water_equilibrium_spec}b show very minor qualitative discrepancies.
A similar conclusion holds for Fig.~\ref{fig:liquid_water_equilibrium_spec}c, where the T-RPMD approximation of the Kubo-transformed velocity autocovariance function $\tilde{C}_{vv}(t) \approx \frac{1}{N_\text{H$_2$O}} \sum_{i=1}^{N_\text{H$_2$O}} \mathbb{E} \left( \overline{v}_i(t) \cdot \overline{v}_i(0) \right)$ is plotted for the three tested T-RPMD schemes.
Collectively, these observations indicate that the accuracy of dynamical properties computed with BAOAB-like schemes is not significantly affected by the particular $\theta$ employed if conditions~\eqref{eq:C1}-\eqref{eq:C4} in Section~\ref{theory} are met.
This result is expected due to the fact that the considered dynamical properties depend on bead-averaged (i.e., centroid-mode) coordinates, whose evolution is largely independent of the choice of $\theta$ under weak coupling between the centroid and non-centroid ring-polymer modes.

\begin{figure}
    \centering
    \includegraphics[width=0.8\columnwidth]{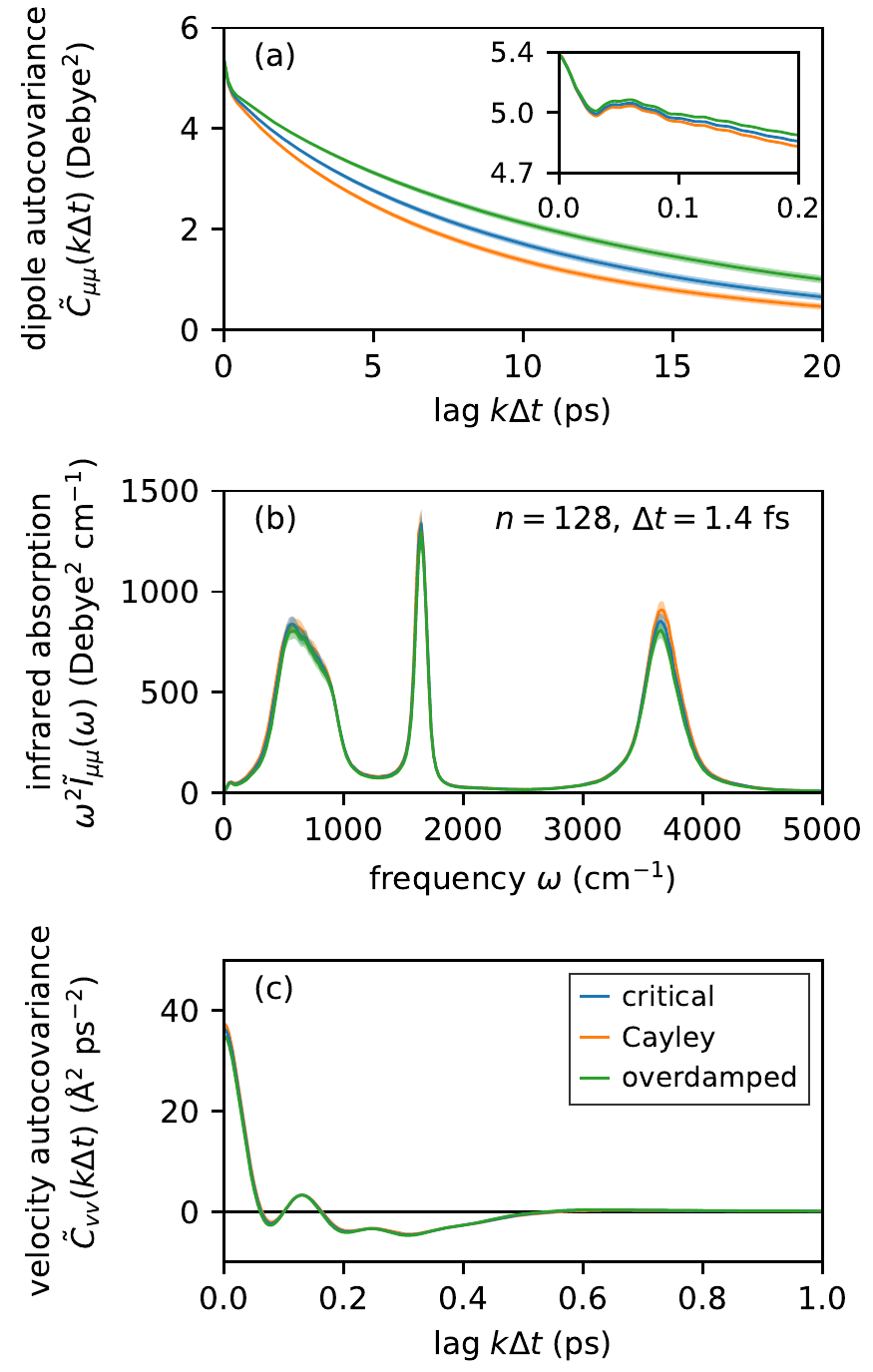}
    \caption{\small 
    Molecular dipole autocovariance function (a), corresponding infrared absorption spectrum (b), and molecular velocity autocovariance function (c) in room-temperature liquid water for various BAOAB-like T-RPMD schemes.
    The plotted quantities autocovariance exhibit minor qualitative discrepancies across schemes, which suggests that all schemes compliant with conditions~\eqref{eq:C1}-\eqref{eq:C4} exhibit comparable accuracy in the computation of dynamical properties.
    Numerical estimates of the autocovariance functions were obtained using the protocol described in Section~\ref{calc_details}.
    }
    \label{fig:liquid_water_equilibrium_spec}
\end{figure}

\section{Summary}
\label{summary}

Previous works showed that strong stability\cite{Korol2019} and dimensionality freedom\cite{Korol2020} are essential features of a robust T-RPMD integration scheme that standard integrators do not possess.
A T-RPMD scheme with these features, denoted BCOCB, was introduced via a simple and inexpensive Cayley modification of the free ring-polymer update (i.e., the ``A'' sub-step) of the standard BAOAB integrator.
The BCOCB scheme was then shown to dramatically outperform BAOAB at estimating static and dynamic properties of various systems with remarkable accuracy at unprecedented time-steps.\cite{Korol2020}

The current work generalizes beyond the Cayley modification by introducing a simple parameterization of the free ring-polymer update and a corresponding family of strongly stable and dimension-free modifications of the BAOAB scheme.
Among these schemes lies BCOCB, which is found to exhibit superior configurational sampling despite exhibiting worse accuracy and efficiency for observables that depend on the non-centroid ring-polymer velocities.
This conclusion is obtained theoretically via exhaustive analysis of a harmonic model, and numerically via simulation of a realistic quantum-mechanical model of liquid water at room temperature.
In this way, the current work convincingly demonstrates the superiority of the BCOCB scheme for accurate and efficient equilibrium simulation of condensed-phase systems with T-RPMD.

To conclude, we stress that implementing BCOCB or any of the new dimension-free and strongly-stable schemes leads to no additional cost, parameters or coding overhead relative to the standard BAOAB integrator.
The modified integrators thus provide ``turnkey'' means to significantly improve the accuracy and stability of existing (T-)RPMD implementations.\cite{Suleimanov2013a,Kapil2019}

\begin{acknowledgements}

J.~L.~R-R.\ and J.~S.\ contributed equally to this work.
This work was supported in part by the U.S.\ Department of Energy (DE-SC0019390) and the National Institutes of Health (R01GM125063).
N.~B.-R.\ acknowledges support by the Alexander von Humboldt foundation and the National Science Foundation (DMS-1816378).

\end{acknowledgements}

\section{Supplementary material}
\label{suppmat}

\subsection{Necessary and sufficient condition for eigenvalues of a $2 \times 2$ real matrix to be inside the unit circle} 
\label{eigs}

This section provides a proof of the standard result used in Sections~\ref{trpmd_free_particle} and~\ref{trpmd_harmonic_oscillator} to infer ergodicity of the T-RPMD update for free and harmonically-confined ring polymers.

\begin{theorem} \label{thm:eigs}
The spectral radius of a $2 \times 2$ real matrix $\m{M}$ is strictly less than one if and only if
\begin{equation} \label{eq:unit_condition}
    |\tr(\m{M})| < 1 + \det(\m{M}) < 2\;.
\end{equation}
\end{theorem}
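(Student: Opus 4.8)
The plan is to pass to the characteristic polynomial $p(\lambda) = \lambda^2 - \tau\lambda + \delta$, where $\tau = \tr(\m{M})$ and $\delta = \det(\m{M})$, so that the spectral radius of $\m{M}$ equals $\max(|\lambda_+|, |\lambda_-|)$ for the two roots $\lambda_\pm = (\tau \pm \sqrt{\tau^2 - 4\delta})/2$. Since both the spectral radius and the target inequality depend on $\m{M}$ only through $\tau$ and $\delta$, the whole statement reduces to characterizing when both roots of $p$ lie strictly inside the unit disk. I would resolve this by splitting on the sign of the discriminant $\tau^2 - 4\delta$.

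In the complex-conjugate case $\tau^2 - 4\delta < 0$, the roots satisfy $|\lambda_+|^2 = |\lambda_-|^2 = \lambda_+\lambda_- = \delta$, so the spectral radius is $\sqrt{\delta}$ and the spectral condition is exactly $\delta < 1$. I would then confirm this matches the target inequality in this regime: $\delta < 1$ is equivalent to $1 + \delta < 2$, while $\tau^2 < 4\delta$ forces $\delta > 0$ and, using $(1+\delta)^2 - 4\delta = (1-\delta)^2 \ge 0$, gives $\tau^2 < 4\delta \le (1+\delta)^2$, hence $|\tau| < 1 + \delta$ holds automatically. Thus in the complex case both the spectral condition and $|\tau| < 1 + \delta < 2$ collapse to $\delta < 1$.

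In the real case $\tau^2 - 4\delta \ge 0$ (which also covers the degenerate double root), both real roots lie in $(-1,1)$ precisely when $p(1) = 1 - \tau + \delta > 0$ and $p(-1) = 1 + \tau + \delta > 0$ (together equivalent to $|\tau| < 1 + \delta$) together with the vertex constraint $\tau/2 \in (-1,1)$, i.e.\ $|\tau| < 2$. I would show that $|\tau| < 1 + \delta < 2$ supplies all three conditions, since the chain yields $|\tau| < 2$ at no extra cost, and conversely that both roots being in $(-1,1)$ gives $|\tau| < 1 + \delta$ from $p(\pm 1) > 0$ and $1 + \delta < 2$ from $\delta = \lambda_+\lambda_- < 1$. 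Reassembling the two cases then yields the claimed equivalence.

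The main obstacle, and the only delicate point, lies in the real-root case: positivity of $p$ at the endpoints $\pm 1$ by itself only forces the two roots to the same side of each endpoint, so one must adjoin the vertex condition $|\tau| < 2$ to exclude the spurious configuration in which both roots sit outside $[-1,1]$ on a common side. Care with strictness throughout, including the discriminant-zero double root where $\delta = \tau^2/4$ and the conditions reduce consistently to $|\tau| < 2$, is what guarantees the final characterization is the strict inequality $|\tau| < 1 + \delta < 2$.
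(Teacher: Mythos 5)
Your proposal is correct and follows essentially the same route as the paper: a case split on whether the eigenvalues are a complex-conjugate pair or both real, with the complex case reducing to $\det(\m{M}) < 1$ after noting $|\tr(\m{M})| < 1 + \det(\m{M})$ holds automatically there, and the real case handled by the endpoint tests $p(\pm 1) > 0$, which are exactly the paper's factored conditions $(1 \mp \lambda_1)(1 \mp \lambda_2) > 0$. The only cosmetic difference is that you exclude the configuration with both real roots outside $[-1,1]$ on the same side via the vertex condition $|\tau| < 2$, whereas the paper uses $\det(\m{M}) = \lambda_1\lambda_2 < 1$ for the same purpose; both close the gap correctly.
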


Fig.~\ref{fig:proof} plots eigenvalue pairs $\lambda_1, \lambda_2$ that satisfy Eq.~\ref{eq:unit_condition} for a fixed value of $\det(\m{M}) = \lambda_1 \lambda_2$.  
Note that the spectral radius of $\m{M}$ is minimized when $\lambda_1$ and $\lambda_2$ are on the circle with radius $r = \sqrt{\det(\m{M})}$.

\begin{proof} 
Let $\lambda_1, \lambda_2$ be the (possibly complex) eigenvalues of $\m{M}$.
By definition, the spectral radius of $\m{M}$ is $\max(|\lambda_1|, |\lambda_2|) =: \rho$.  Since $\m{M}$ is real, both $\tr(\m{M}) = \lambda_1 + \lambda_2$ and $\det(\m{M}) =\lambda_1 \lambda_2$ are real.
Thus, either:
\begin{enumerate}
    \item $\lambda_{1}, \lambda_2$ are a complex conjugate pair; or,
    \item $\lambda_{1}, \lambda_2$ are both real.
\end{enumerate}

In the first case, $\lambda_{1} = a + i b$ and $\lambda_2 = a - i b$ for some real numbers $a$ and $b$ with $b \ne 0$, and hence, $\det(\m{M}) = \lambda_1 \lambda_2 = a^2 + b^2 > 0$, and $\rho = |\lambda_{1}| = |\lambda_{2}| = \sqrt{a^2 + b^2}$, i.e., the eigenvalues lie on the circle with radius $\rho = \sqrt{a^2 + b^2} = \sqrt{\det(\m{M})}$.
In this case, the first inequality in Eq.~\ref{eq:unit_condition} holds since $b \ne 0$ implies
\begin{equation*}
    |\tr(\m{M})| = 2 |a| < 2 \rho \le 1 + \rho^2 = 1 + \det(\m{M}) \;.
\end{equation*}
Hence, Eq.~\ref{eq:unit_condition} is equivalent to $1 + \det(\m{M}) < 2$ or $\rho < 1$.

In the second case, $\lambda_1, \lambda_2$ are both real, and the condition $|\tr(\m{M})| < 1 + \det(\m{M})$ is equivalent to
\begin{equation*}
\begin{aligned}
    & 1 + \lambda_1 \lambda_2 + \lambda_1 + \lambda_2 = (1 + \lambda_1)(1 + \lambda_2) > 0~ \text{and} \;,
    \\
    & 1 + \lambda_1 \lambda_2 - \lambda_1 - \lambda_2 = (1 - \lambda_1)(1 - \lambda_2) > 0\;.
\end{aligned}
\end{equation*}
Together with $\det(\m{M}) = \lambda_1 \lambda_2 < 1$, these conditions are equivalent to $\rho = \max(|\lambda_1|, |\lambda_2|) < 1$. 
\end{proof}

\begin{figure}
    \centering
    \includegraphics[width=0.7\columnwidth]{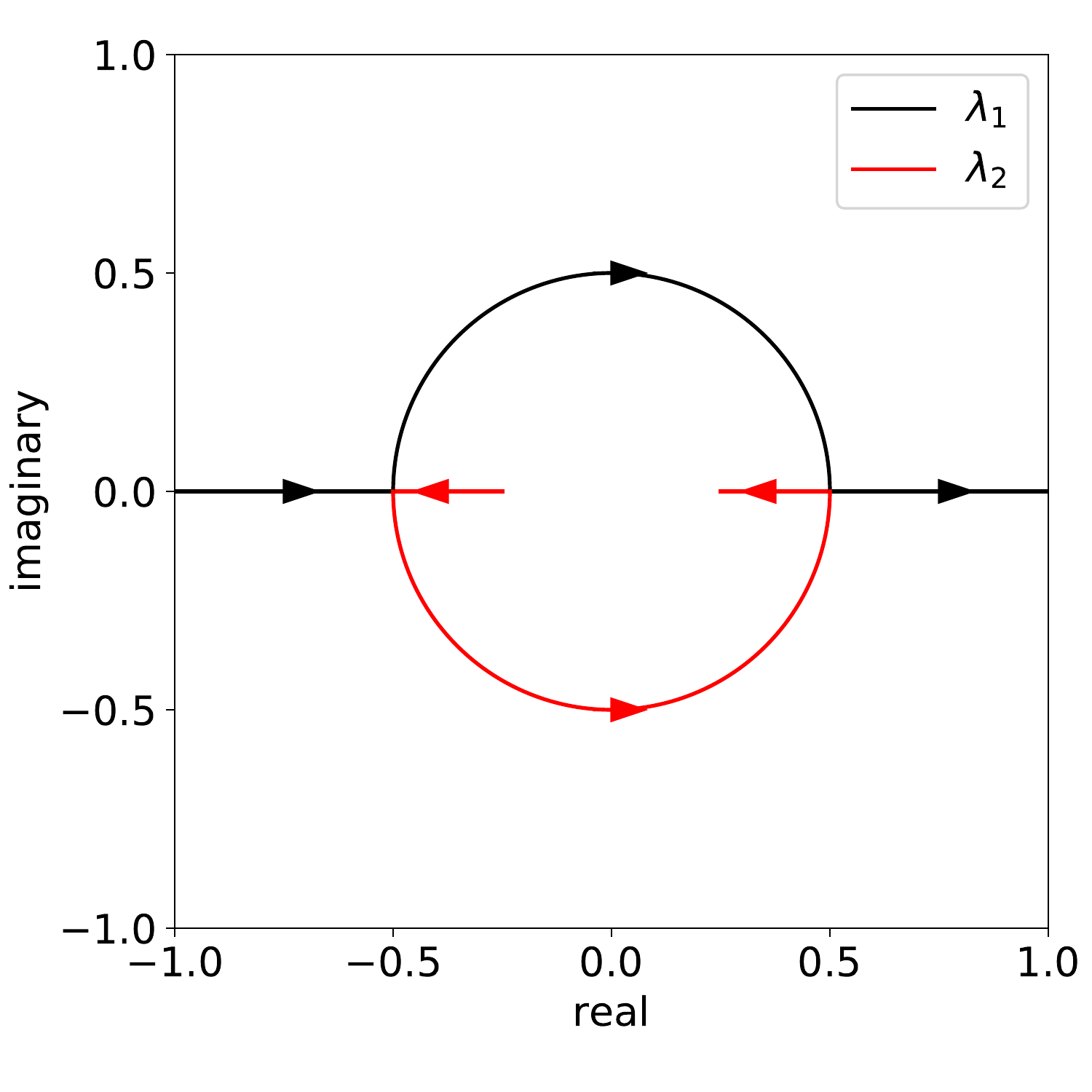}
    \caption{\small
    This figure plots all possible eigenvalue pairs $\lambda_1, \lambda_2$ of a matrix $\m{M}$ that satisfies Eq.~\ref{eq:unit_condition} with $\det(\m{M}) = \lambda_1\lambda_2 = 1/4$.
    The eigenvalue pairs either lie on the circle with radius $r = 1/2$ or are both real, and in the former case, the spectral radius of $\m{M}$ is minimal.
    }
    \label{fig:proof}
\end{figure}

\subsection{Stability condition for harmonic external potentials}
\label{ergodicity}

This section proves that condition~\eqref{eq:C3} implies property~(P3), as claimed in Section~\ref{rpmd_harmonic_oscillator}.
For notational brevity, we define
\begin{equation*}
    A(x) \ := \ \cos(\theta(x)) - \frac{\Delta t^2 (\Lambda/m)}{2} \frac{\sin(\theta(x))}{x} \;.
\end{equation*}
Note that $A(x)$ is equal to $\mathcal{A}_{j,n}$ in the display under Eq.~\ref{eq:tC4} if $x = \omega_{j,n} \Delta t$.

\begin{theorem} \label{thm:ergodicity}
For any $\alpha^{\star} > 0$, (A2) implies (A1).
\begin{description}
    \item[(A1)] For all $\Lambda \ge 0$, $m > 0$ and $\Delta t > 0$ satisfying $\Delta t^2 \Lambda/m < \alpha^{\star}$, the function $\theta$ satisfies
     \begin{equation*}
         \left| A(x) \right| < 1 \quad \text{for $x > 0$} \;.
     \end{equation*}
    \item[(A2)] The function $\theta$ satisfies:
     \begin{equation*}
         0<\theta(x) < 2 \arctan(2 x / \alpha^{\star}) \quad \text{for $x > 0$} \;.
     \end{equation*}
\end{description}
\end{theorem}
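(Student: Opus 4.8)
The plan is to prove the two one-sided bounds $A(x) < 1$ and $A(x) > -1$ separately, after recasting everything in terms of the two natural parameters. Write $\mu := \Delta t^2(\Lambda/m)$, so that the admissible range in (A1) is precisely $\mu \in [0,\alpha^{\star})$, and abbreviate $\phi := \theta(x)$, so that $A(x) = \cos\phi - \tfrac{\mu}{2}\,\tfrac{\sin\phi}{x}$. Hypothesis (A2) says $0 < \phi < 2\arctan(2x/\alpha^{\star})$; since $\arctan$ maps into $(0,\pi/2)$, this already confines $\phi$ to $(0,\pi)$, so that $\sin\phi > 0$ and $\cos\phi < 1$ throughout. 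The upper bound is then immediate: because $\mu \ge 0$, $\sin\phi > 0$ and $x > 0$, the subtracted term is nonnegative, whence $A(x) \le \cos\phi < 1$.

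The substance lies in the lower bound $A(x) > -1$, equivalently $1 + \cos\phi > \tfrac{\mu}{2}\tfrac{\sin\phi}{x}$. Here I would invoke the half-angle identities $1 + \cos\phi = 2\cos^2(\phi/2)$ and $\sin\phi = 2\sin(\phi/2)\cos(\phi/2)$, which turn the target into $2\cos^2(\phi/2) > \tfrac{\mu}{x}\sin(\phi/2)\cos(\phi/2)$. Since $\phi \in (0,\pi)$ forces $\cos(\phi/2) > 0$, one may cancel a factor of $2\cos(\phi/2) > 0$ and rearrange to the clean equivalent $\tan(\phi/2) < 2x/\mu$ (treating the case $\mu = 0$ separately, where $A(x) = \cos\phi > -1$ is trivial from $\phi < \pi$).

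It remains to verify this tangent inequality from (A2), and this is where the hypothesis is seen to be exactly the right one. Using $\mu < \alpha^{\star}$ gives $2x/\mu > 2x/\alpha^{\star}$ (strictly, as $x > 0$), so it suffices to show $\tan(\phi/2) \le 2x/\alpha^{\star}$; because $\tan$ is increasing on $(0,\pi/2)$ and $\phi/2$ lies there, this is equivalent to $\phi \le 2\arctan(2x/\alpha^{\star})$, which is implied by (A2). The one subtlety worth flagging is that the strict gap yielding $|A(x)| < 1$ originates from $\mu < \alpha^{\star}$ rather than from the strictness in (A2); this is precisely what later permits the saturating choice $\theta(x) = 2\arctan(x/2)$ (with $\alpha^{\star} = 4$) to still satisfy (A1) while obeying the non-strict condition~\eqref{eq:C3}. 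I do not anticipate a genuine obstacle beyond recognizing the half-angle reduction that collapses (A1) onto the defining inequality of (A2).
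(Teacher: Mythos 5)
Your proof is correct, and it takes a recognizably different route through the trigonometry than the paper does. The paper treats $|A(x)|<1$ as a single two-sided inequality and collapses it via the harmonic addition identity, writing $A(x)=\cos\bigl(\theta(x)+\phi_{\alpha}(x)\bigr)/\cos\bigl(\phi_{\alpha}(x)\bigr)$ with $\phi_{\alpha}(x)=\arctan\bigl(\alpha/(2x)\bigr)$, so that $|A(x)|<1$ becomes the phase condition $\phi_{\alpha}<\theta+\phi_{\alpha}<\pi-\phi_{\alpha}$, i.e.\ $0<\theta(x)<2\arctan(2x/\alpha)$ after applying $\pi-2\arctan(x)=2\arctan(1/x)$; the conclusion then follows from monotonicity of $\arctan$ exactly as in your final step. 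You instead split the two one-sided bounds: $A(x)<1$ is immediate from $\sin\theta(x)>0$ on $(0,\pi)$ together with $\mu\ge 0$, and the half-angle identities reduce $A(x)>-1$ to $\tan(\theta(x)/2)<2x/\mu$ --- the same threshold, reached by a different identity. Your route buys two small but genuine advantages. First, your argument only ever consumes the non-strict bound $\theta(x)\le 2\arctan(2x/\alpha^{\star})$, the strictness of $|A(x)|<1$ coming entirely from $\mu<\alpha^{\star}$, as you correctly flag; this means your proof yields Corollary~\ref{cor:stab_ho} directly with $\alpha^{\star}=4$ and the saturating Cayley angle $\theta(x)=2\arctan(x/2)$, whereas the paper must pass through the $\alpha^{\star}=4-\epsilon$, $\epsilon\to 0$ limiting argument after the theorem. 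Second, you treat the degenerate case $\mu=0$ (i.e.\ $\Lambda=0$, which (A1) permits) explicitly, a case the paper's formulation covers only tacitly. Both proofs in fact establish the same underlying equivalence --- for $\theta(x)\in(0,\pi)$ and $\mu>0$, $|A(x)|<1$ if and only if $\theta(x)<2\arctan(2x/\mu)$ --- so the difference is one of mechanism (half-angle cancellation versus phase shift), not of substance.
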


\begin{proof}
Let $\alpha = \Delta t^2 (\Lambda/m)$.
For notational brevity, define
\begin{equation*}
    \phi_{\alpha}(x) := \arctan( \alpha / (2 x) )  \quad \text{for $x > 0$} \;.
\end{equation*}
By the harmonic addition identity
\begin{equation*}
    \cos(\theta) - \tan(\phi_{\alpha}) \sin(\theta) =\frac{\cos(\theta + \phi_{\alpha})}{\cos(\phi_{\alpha})} \;,
\end{equation*}
note that (A1) can be rewritten as 
\begin{align} \label{eq:phi_the}
    \left| \frac{\cos (\theta(x) + \phi_{\alpha}(x))}{\cos (\phi_{\alpha}(x))} \right| < 1 \quad \text{for $x > 0$,~ $0 < \alpha < \alpha^{\star}$} \;.
\end{align}
For $0 < \theta(x) < \pi$, Eq.~\ref{eq:phi_the} holds if and only if 
\begin{equation*}
    \phi_{\alpha}(x) < \theta(x) + \phi_{\alpha}(x) < \pi - \phi_{\alpha}(x) \;,
\end{equation*}
which can be rewritten as
\begin{equation} \label{eq:phi_the_2}
    0 < \theta(x) < 2 \arctan(2 x / \alpha) \;,
\end{equation}
where we used the identity
\begin{equation*}
    \pi - 2\arctan(x) = 2\arctan(1/x) \quad \text{valid for $x > 0$} \;.
\end{equation*}
Since $\arctan$ is monotone increasing, and $0<\alpha < \alpha^{\star}$ by assumption, we may conclude that
\begin{equation*}
    0 < \theta(x) < 2 \arctan(2 x / \alpha^{\star})  < 2 \arctan(2 x / \alpha) \;. 
\end{equation*}
Thus, if (A2) holds, then Eq.~\ref{eq:phi_the_2} holds and therefore (A1) holds.
\end{proof}

Fix $\epsilon \in (0, 1)$.
Since Theorem~\ref{ergodicity} is true for arbitrary $\alpha^{\star}$, if we take $\alpha^{\star} = 4 - \epsilon$, then the theorem holds with $\Delta t^2 \Lambda/m < 4 - \epsilon$ in Theorem~\ref{ergodicity} (A1), and $\theta(x) < 2\arctan(2 x/(4 - \epsilon))$ in Theorem~\ref{ergodicity} (A2).
Since $\epsilon > 0$ is arbitrary, and $\arctan$ is monotone increasing, we can conclude that the theorem holds with $\Delta t^2 \Lambda/m < 4$ and $\theta(x) \le 2\arctan(x/2)$.
Summarizing,

\begin{corollary} \label{cor:stab_ho}
Suppose that the function $\theta$ satisfies
\begin{equation*}
    0 < \theta(x) \le 2\arctan(x/2) \quad \text{for $x > 0$} \;.
\end{equation*}
Then for all $\Lambda \ge 0$,  $m > 0$ and $\Delta t > 0$ satisfying $\Delta t^2 \Lambda/m < 4$, we have
\begin{equation*}
    \left| A(x) \right| \ < \ 1 \quad \text{for $x > 0$} \;.
\end{equation*}
\end{corollary}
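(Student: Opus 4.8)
The plan is to obtain the corollary directly from Theorem~\ref{thm:ergodicity}, which already supplies the implication (A2)$\Rightarrow$(A1) for \emph{every} $\alpha^\star > 0$. The only gap to close is that Theorem~\ref{thm:ergodicity} requires the \emph{strict} bound $\theta(x) < 2\arctan(2x/\alpha^\star)$ and the \emph{strict} parameter constraint $\Delta t^2 \Lambda/m < \alpha^\star$, whereas the corollary hypothesizes the \emph{non-strict} bound $\theta(x) \le 2\arctan(x/2)$. The idea is to exploit the strict parameter constraint $\Delta t^2 \Lambda/m < 4$ assumed in the corollary as a source of slack that lets us shrink $\alpha^\star$ slightly below $4$ and thereby convert the non-strict bound on $\theta$ into the strict bound required by (A2).

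Concretely, first I would fix an arbitrary admissible triple $(\Lambda, m, \Delta t)$ with $\Delta t^2 \Lambda/m < 4$ and choose $\epsilon \in (0,1)$ small enough that $\Delta t^2 \Lambda/m < 4 - \epsilon$; this is possible precisely because the constraint is strict. Setting $\alpha^\star = 4 - \epsilon$, the next step is to verify hypothesis (A2) of Theorem~\ref{thm:ergodicity} for this $\alpha^\star$. Writing $2\arctan(x/2) = 2\arctan(2x/4)$ and noting that $2x/(4-\epsilon) > 2x/4$ for $x > 0$, strict monotonicity of $\arctan$ gives
\begin{equation*}
    \theta(x) \ \le \ 2\arctan(x/2) \ < \ 2\arctan\!\left( \frac{2x}{4-\epsilon} \right) \ = \ 2\arctan(2x/\alpha^\star) \quad \text{for $x > 0$,}
\end{equation*}
together with $\theta(x) > 0$, which is exactly (A2).

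With (A2) established for $\alpha^\star = 4 - \epsilon$, Theorem~\ref{thm:ergodicity} yields (A1): for all parameters satisfying $\Delta t^2 \Lambda/m < 4 - \epsilon$, one has $\abs{A(x)} < 1$ for $x > 0$. Since the triple fixed at the outset satisfies $\Delta t^2 \Lambda/m < 4 - \epsilon$ by construction, we conclude $\abs{A(x)} < 1$, which is the claim. I expect the main (and only) subtlety to be this bridging of strict and non-strict inequalities: the crucial observation is that the corollary's strict constraint on the parameters furnishes exactly the room needed to absorb the non-strictness of the hypothesis $\theta \le 2\arctan(x/2)$ by passing to $\alpha^\star = 4 - \epsilon$, so that no genuine limiting argument or additional estimate is required beyond a single application of the theorem with a well-chosen $\alpha^\star$.
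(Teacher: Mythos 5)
Your proof is correct and takes essentially the same route as the paper: both obtain the corollary by applying Theorem~\ref{thm:ergodicity} with $\alpha^\star = 4 - \epsilon$ and using strict monotonicity of $\arctan$ to pass from $\theta(x) \le 2\arctan(x/2)$ to the strict bound $\theta(x) < 2\arctan(2x/(4-\epsilon))$ required by (A2). If anything, your version is slightly more careful than the paper's, since you fix the admissible triple $(\Lambda, m, \Delta t)$ first and then choose $\epsilon$ accordingly, whereas the paper compresses this quantifier bookkeeping into the phrase ``since $\epsilon > 0$ is arbitrary.''
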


\subsection{Dimension-free quantitative contraction rate for harmonic external potentials in the infinite-friction limit}
\label{contr}

In the infinite-friction limit, Eq.~\ref{eq:BGOGB_1D_HO} simplifies to
\begin{align*}
    \m{\mathcal{M}}_{j,n} \ &= \ \m{\mathcal{B}}^{1/2} \m{\mathcal{S}}_{j,n}^{1/2} \begin{bmatrix} 1 & 0 \\ 0 & 0 \end{bmatrix} \m{\mathcal{S}}_{j,n}^{1/2} \m{\mathcal{B}}^{1/2} \nonumber \;\; \text{and} 
    \\
    \m{\mathcal{R}}_{j,n} \ &= \ \frac{1}{\beta m_n} \m{\mathcal{B}}^{1/2} \m{\mathcal{S}}_{j,n}^{1/2} \begin{bmatrix} 0 & 0 \\ 0 & 1 \end{bmatrix} ( \m{\mathcal{B}}^{1/2} \m{\mathcal{S}}_{j,n}^{1/2}  )^{\mathrm{T}} \;.
\end{align*} 
The $k$th step of the corresponding T-RPMD integrator can be written compactly as
\begin{equation*}
    \begin{bmatrix} \varrho_j^{(k)} \\ \varphi_j^{(k)} \end{bmatrix}  \ = \ \m{\mathcal{M}}_{j,n} \begin{bmatrix} \varrho_j^{(k-1)} \\ \varphi_j^{(k-1)} \end{bmatrix} + \m{\mathcal{R}}_{j,n}^{1/2} \begin{bmatrix} \xi_j^{(k-1)} \\ \eta_j^{(k-1)} \end{bmatrix} \;,
\end{equation*}
where $\xi_j^{(k-1)}$ and $\eta_j^{(k-1)}$ are independent standard normal random variables.  
Suppose that the initial velocity is drawn from the Maxwell--Boltzmann distribution, i.e., $\varphi_j^{(0)} \sim \mathcal{N}(0, (\beta m_n)^{-1})$ and the initial position is drawn from an arbitrary distribution $\mu_j$ on $\mathbb{R}$, i.e., $\varrho_j^{(0)} \sim \mu_j$.
Let $p^k_{j,n}$ denote the $k$-step transition kernel of the position-marginal, i.e., $\mu_j p^k_{j,n}$ is the probability distribution of $\varrho_j^{(k)}$ with $\varrho_j^{(0)} \sim \mu_j$.

The next theorem shows that starting from any two initial distributions $\mu_j$ and $\nu_j$ on $\mathbb{R}$, the distance between the distributions $\mu_j p_{j,n}^k$ and $\nu_j p_{j,n}^k$ is contractive.
We quantify the distance between these distributions in terms of the $2$-Wasserstein metric.
For two probability distributions $\mu$ and $\nu$ on $\mathbb{R}$, the $2$-Wasserstein distance between $\mu$ and $\nu$ is defined as:
\begin{equation*}
    \mathcal{W}_2(\mu, \nu) = \Big( \inf_{\substack{X \sim \mu \\ Y \sim \nu}} \E( |X - Y|^2 ) \Big)^{1/2} \;,
\end{equation*}
where the infimum is taken over all bivariate random variables $(X,Y)$ such that $X \sim \mu$ and $Y \sim \nu$.\cite{Villani2008}

\begin{theorem} \label{thm:contr}
Suppose that the function $\theta$ satisfies
\begin{equation*}
    0 < \theta(x) \le 2\arctan(x/2) \quad \text{for $x > 0$} \;. 
\end{equation*}
Then for all $k>1$, $\Lambda \ge 0$,  $m > 0$ and $\Delta t > 0$ satisfying $\Delta t^2 \Lambda/m < 4$, and for all initial distributions $\mu_j$ and $\nu_j$ on $\mathbb{R}$,
\begin{equation}
\begin{aligned}
   & \mathcal{W}_2(\mu_j p^k_{j,n}, \nu_j p^k_{j,n}) \ \le \\ 
   &  \begin{cases} A(\omega_{j,n} \Delta t)^{k-1} \mathcal{W}_2(\mu_j, \nu_j) & \text{if $A(\omega_{j,n} \Delta t)>0$}, \\
    \frac{1}{2} \frac{1}{k-1} \mathcal{W}_2(\mu_j, \nu_j) & \text{else}. \end{cases} \\
\end{aligned}
\end{equation}
\end{theorem}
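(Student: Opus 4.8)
The plan is to exploit the degeneracy of the infinite-friction limit. There the Ornstein--Uhlenbeck block collapses to the rank-one projector $\begin{bmatrix}1&0\\0&0\end{bmatrix}$, so the one-step map $\m{\mathcal{M}}_{j,n}=\m{\mathcal{B}}^{1/2}\m{\mathcal{S}}_{j,n}^{1/2}\begin{bmatrix}1&0\\0&0\end{bmatrix}\m{\mathcal{S}}_{j,n}^{1/2}\m{\mathcal{B}}^{1/2}$ is itself rank one. Writing $\m{\mathcal{M}}_{j,n}=\vec{u}\,\vec{w}^{\mathrm{T}}$ with $\vec{u}=\m{\mathcal{B}}^{1/2}\m{\mathcal{S}}_{j,n}^{1/2}\vec{e}_1$ and $\vec{w}^{\mathrm{T}}=\vec{e}_1^{\mathrm{T}}\m{\mathcal{S}}_{j,n}^{1/2}\m{\mathcal{B}}^{1/2}$, the scalar $\vec{w}^{\mathrm{T}}\vec{u}$ equals $\tr(\m{\mathcal{M}}_{j,n})$, which the excerpt's trace formula identifies with $A:=A(\omega_{j,n}\Delta t)$ in this limit. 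Consequently $\m{\mathcal{M}}_{j,n}^{k}=A^{k-1}\m{\mathcal{M}}_{j,n}$ for all $k\ge 1$: every power collapses to a single matrix up to the scalar $A^{k-1}$, which is precisely what will make the equilibration rate explicit and $n$-independent.

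Next I would set up a synchronous coupling. I run both chains, started respectively from $\mu_j$ and $\nu_j$, with a common Maxwell--Boltzmann initial velocity and common noise increments at every step, while drawing the initial positions from the optimal $\mathcal{W}_2$-coupling of $\mu_j$ and $\nu_j$. Then the noise cancels in the difference $\vec{\delta}^{(k)}$ of the two phase vectors, so $\vec{\delta}^{(k)}=\m{\mathcal{M}}_{j,n}^{k}\vec{\delta}^{(0)}$ with $\vec{\delta}^{(0)}=\begin{bmatrix}\varrho_j^{(0)}-\tilde{\varrho}_j^{(0)} & 0\end{bmatrix}^{\mathrm{T}}$, since the initial velocity difference vanishes. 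Using the power collapse, the position component satisfies $\varrho_j^{(k)}-\tilde{\varrho}_j^{(k)}=A^{k-1}(\m{\mathcal{M}}_{j,n})_{11}(\varrho_j^{(0)}-\tilde{\varrho}_j^{(0)})$. Because this is an admissible coupling, $\mathcal{W}_2(\mu_j p^k_{j,n},\nu_j p^k_{j,n})^2\le \E[(\varrho_j^{(k)}-\tilde{\varrho}_j^{(k)})^2]=\big(A^{k-1}(\m{\mathcal{M}}_{j,n})_{11}\big)^2\,\mathcal{W}_2(\mu_j,\nu_j)^2$, which reduces the theorem to bounding the prefactor $A^{k-1}(\m{\mathcal{M}}_{j,n})_{11}$.

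The crux is a short computation giving $(\m{\mathcal{M}}_{j,n})_{11}=\tfrac{1}{2}(1+A)$. Writing $\m{\mathcal{S}}_{j,n}^{1/2}$ as the rotation-type matrix with angle $\theta(\omega_{j,n}\Delta t)/2$ and invoking the half-angle identities $\cos^2(\theta/2)=(1+\cos\theta)/2$ and $\sin(\theta/2)\cos(\theta/2)=\tfrac12\sin\theta$ together with the definition of $A$, the $(1,1)$ entry reduces to exactly $\tfrac12(1+A)$. The prefactor therefore equals $\tfrac12(1+A)A^{k-1}$. If $A>0$, then $0<\tfrac12(1+A)\le 1$ because $A\le 1$ under condition~\eqref{eq:C3} and Corollary~\ref{cor:stab_ho}, so the prefactor is at most $A^{k-1}$, which is the first branch. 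If $A\le 0$, I set $a=|A|\in[0,1)$ and write the prefactor as $\tfrac12(1-a)a^{k-1}$; the elementary calculus bound $\max_{a\in[0,1]}(1-a)a^{k-1}=\tfrac1k\big(1-\tfrac1k\big)^{k-1}\le\tfrac{1}{k-1}$, with maximizer $a=(k-1)/k$, then yields the bound $\tfrac{1}{2(k-1)}$ and hence the second branch.

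The main work lies in two places. First, the rank-one reduction and the trace identity $\vec{w}^{\mathrm{T}}\vec{u}=A$ must be checked carefully, as this collapse is what eliminates any frequency-dependent transient and renders the rate dimension-free. Second, the identity $(\m{\mathcal{M}}_{j,n})_{11}=\tfrac12(1+A)$ is the linchpin that turns the coupling prefactor into the clean, fully $A$-dependent expression $\tfrac12(1+A)A^{k-1}$ and is exactly what forces both branches to close. I expect the genuinely delicate step to be the $A\le 0$ case: geometric decay of $|A|^{k-1}$ is unavailable there since $|A|$ may be arbitrarily close to one, so the entire $1/(k-1)$ rate must be extracted from the maximization of $(1-a)a^{k-1}$, and one must confirm both that the hypotheses guarantee $|A|<1$ throughout (via Corollary~\ref{cor:stab_ho}) and that the maximizer analysis reproduces precisely the stated constant.
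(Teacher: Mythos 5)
Your proposal is correct and follows essentially the same route as the paper's proof: the collapse $\m{\mathcal{M}}_{j,n}^k = A^{k-1}\m{\mathcal{M}}_{j,n}$ (the paper gets this from the eigenvalues $\{0, A(\omega_{j,n}\Delta t)\}$, you from the rank-one factorization), the linchpin identity $(\m{\mathcal{M}}_{j,n})_{11} = \tfrac{1}{2}(1+A)$, Corollary~\ref{cor:stab_ho} for $|A|<1$, and the maximization of $(1-a)a^{k-1}$ yielding the $\tfrac{1}{2(k-1)}$ branch are all identical. The only cosmetic difference is that you bound $\mathcal{W}_2$ via a synchronous coupling with cancelling noise, whereas the paper conditions on the optimally coupled initial positions and invokes the Gaussian $2$-Wasserstein formula of Ref.~\onlinecite{Givens1984}; both reduce to the same prefactor $\tfrac{1}{2}(1+A)\lvert A\rvert^{k-1}$.
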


\begin{proof}
In the infinite-friction limit, the eigenvalues of $\m{\mathcal{M}}_{j,n}$ are $\{0, A(\omega_{j,n} \Delta t)\}$, where $A(x)$ is defined in Section~\ref{ergodicity}.
Let $\varrho_j^{(0)} \sim \mu_j$ and $\tilde \varrho_j^{(0)} \sim \nu_j$ be an optimal coupling of $\mu_j$ and $\nu_j$, i.e., $\mathcal{W}_2(\mu_j, \nu_j) = \E(| \varrho_j^{(0)} - \tilde \varrho_j^{(0)} |^2)^{1/2}$.
Conditional on $\varrho_j^{(0)}$ and $\tilde \varrho_j^{(0)}$, $\varrho_j^{(k)}$ and $\tilde \varrho_j^{(k)}$ are Gaussian random variables with equal variances, but different means.
By a well-known result for the $2$-Wasserstein distance between Gaussian distributions,\cite{Givens1984}
\begin{align}
    &\mathcal{W}_2(\mu_j p^k_{j,n}, \nu_j p^k_{j,n})^2 \nonumber \\
    &= |A(\omega_{j,n} \Delta t)|^{2(k-1)} (\m{\mathcal{M}}_{j,n})_{11}^2 \mathcal{W}_2(\mu_j , \nu_j)^2 \nonumber \\
    &= |A(\omega_{j,n} \Delta t)|^{2(k-1)} \frac{( 1 + A(\omega_{j,n} \Delta t) )^2}{4} \mathcal{W}_2(\mu_j , \nu_j)^2 \;, \label{w2:identity}
\end{align}
where we used $(\m{\mathcal{M}}_{j,n})_{11} = ( 1 + A(\omega_{j,n} \Delta t) )/2$.

Now we distinguish between two cases.
In the case where $A(\omega_{j,n} \Delta t)>0$, we obtain the required result since $|A(\omega_{j,n} \Delta t)| < 1$ by Corollary~\ref{cor:stab_ho}, and therefore,
\begin{equation} \label{case:1}
    \frac{( 1 + A(\omega_{j,n} \Delta t) )^2}{4} \le 1 \;.
\end{equation}
Otherwise, for $-1 < A(\omega_{j,n} \Delta t) \le 0$ the quantity $|A(\omega_{j,n} \Delta t)|^{2(k-1)} ( 1 + A(\omega_{j,n} \Delta t) )^2$ is maximized at $(-1+1/k)^{2 k} (k-1)^{-2}$, and therefore,
\begin{equation} \label{case:2}
    |A(\omega_{j,n} \Delta t)|^{2(k-1)} \frac{( 1 + A(\omega_{j,n} \Delta t) )^2}{4}  \le \frac{1}{4 (k-1)^2} \;.
\end{equation}
Inserting Eq.~\ref{case:1} and Eq.~\ref{case:2} into Eq.~\ref{w2:identity}, and then taking square roots, gives the required result.
\end{proof}

\subsection{Total variation bound on the equilibrium accuracy error for harmonic external potentials}
\label{tv_error_proof}

In this section, we show that Eq.~\ref{eq:tv_error} follows from conditions~\eqref{eq:C1}-\eqref{eq:C4} in the setting of Section~\ref{trpmd_harmonic_oscillator}.
It is helpful to recall the quantities
\begin{equation} \label{eq:continuous_eigenvalues}
    \omega_{j} = \lim_{n \to \infty} \omega_{j,n} =
    \begin{cases}
    \frac{\pi j}{\hbar \beta} & \text{if $j$ is even} \;, \\
    \frac{\pi (j+1)}{\hbar \beta} & \text{else} \;.
    \end{cases}
\end{equation}
In the following, $\mu_{j, \Delta t}$ and $\mu_{j}$ respectively denote the $j$th factor of the product distributions $\mu_{n, \Delta t}$ and $\mu_{n}$ introduced in Section~\ref{trpmd_harmonic_oscillator}.

\begin{theorem} \label{thm:tv_error}
Suppose that the function $\theta$ satisfies conditions~\eqref{eq:C1}-\eqref{eq:C4}.
Then for all $\Lambda \ge 0$,  $m > 0$ and $\Delta t > 0$ satisfying $\Delta t^2 \Lambda/m < 4$, the total variation distance between $\mu_{n}$ and $\mu_{n, \Delta t}$ is bounded as in Eq.~\ref{eq:tv_error}.
\end{theorem}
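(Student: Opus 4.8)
The plan is to exploit that both $\mu_n$ and $\mu_{n,\Delta t}$ are products of $n-1$ independent centered Gaussians differing only in their variances, $s_j^2$ versus $s_{j,\Delta t}^2$, so that the task reduces to controlling the per-mode variance discrepancies and summing them across modes. The naive route---subadditivity, $d_{\mathrm{TV}}(\mu_n,\mu_{n,\Delta t}) \le \sum_j d_{\mathrm{TV}}(\mu_{j},\mu_{j,\Delta t})$---is doomed, since the per-mode total variation is only \emph{linear} in the relative variance error while $\sum_j \omega_{j,n}^{-1}$ diverges. Instead I would pass through a divergence that tensorizes, such as the Hellinger affinity (multiplicative over independent coordinates) together with $d_{\mathrm{TV}} \le H$, or equivalently the additively-tensorizing Kullback--Leibler divergence via Pinsker's inequality. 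Either reduction yields a bound of the form $d_{\mathrm{TV}}(\mu_n,\mu_{n,\Delta t})^2 \lesssim \sum_{j=1}^{n-1} u_j^2$, where $u_j := s_j^2/s_{j,\Delta t}^2 - 1 \ge 0$ is the per-mode relative error (nonnegative by Eq.~\ref{eq:exact_numerical_IM_1D_comparison}); the essential gain over subadditivity is that the summand is now \emph{quadratic} in $u_j$.

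Next I would bound $u_j$ using condition~\eqref{eq:C4}. From Eq.~\ref{eq:numerical_IM_1D_b}, writing $y := \omega_{j,n}\Delta t$,
\begin{equation*}
 u_j = \frac{(\Lambda/m)\big( \tfrac{y/2}{\tan(\theta(y)/2)} - 1 \big)}{\omega_{j,n}^2 + \Lambda/m} \;.
\end{equation*}
The \emph{lower} bound in~\eqref{eq:C4}, $\theta(x) \ge x/(1+x)$, gives $\theta(y)/2 \ge y/(2(1+y))$; since $\theta(y)/2 \in (0,\pi/2)$ by~\eqref{eq:C2} and $\tan$ is increasing there with $\tan(z) \ge z$, this yields $\tfrac{y/2}{\tan(\theta(y)/2)} \le 1+y$ and hence $u_j \le (\Lambda/m)\,\omega_{j,n}\Delta t/\omega_{j,n}^2 = \Lambda\Delta t/(m\,\omega_{j,n})$. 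The \emph{upper} bound in~\eqref{eq:C4} is what secures $u_j \ge 0$ through~\eqref{eq:C3}, so both halves of the condition enter.

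The decisive step is to show the resulting sum is $n$-independent. Substituting gives $\sum_j u_j^2 \le (\Lambda\Delta t/m)^2 \sum_{j=1}^{n-1}\omega_{j,n}^{-2}$, and from Eq.~\ref{eq:eigenvalues} the nonzero frequencies satisfy $\{\omega_{j,n}^2\}_{j=1}^{n-1} = \{4\omega_n^2 \sin^2(\pi k/n)\}_{k=1}^{n-1}$ as multisets, so the classical identity $\sum_{k=1}^{n-1}\csc^2(\pi k/n) = (n^2-1)/3$ together with $\omega_n = n/(\hbar\beta)$ yields $\sum_{j=1}^{n-1}\omega_{j,n}^{-2} = \tfrac{(\hbar\beta)^2(n^2-1)}{12\,n^2} < (\hbar\beta)^2/12$, a constant free of $n$. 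Collecting the factor $\Lambda\Delta t/m$ with $(\hbar\beta)^2/12$ and rearranging reproduces the stated form $\sqrt{4/3}\,(\hbar\beta/\Delta t)\,(\Delta t^2\Lambda/m)$ of Eq.~\ref{eq:tv_error}. I expect the main obstacle to be the very first step: one must pick a reduction that is simultaneously tensorizable and quadratic in $u_j$, because the obvious subadditive bound grows with $n$; the cosecant-squared identity that delivers the $n$-uniform constant is the other crux, and is precisely the mechanism behind dimension-freedom.
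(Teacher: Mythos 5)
Your proposal is correct and follows the same skeleton as the paper's proof: reduce via the tensorizing Hellinger distance (the paper uses $d_{\mathrm{TV}}(\mu_n,\mu_{n,\Delta t})^2 \le \sum_j 2\,d_{\mathrm{H}}(\mu_j,\mu_{j,\Delta t})^2$ together with the closed form for the Hellinger distance between centered Gaussians) to a sum of squared relative variance errors $u_j^2$ with $u_j = s_j^2/s_{j,\Delta t}^2 - 1$, and then bound $u_j \le \Lambda\Delta t/(m\,\omega_{j,n})$ exactly as the paper does — via $\tan z \ge z$ and the lower bound in~\eqref{eq:C4}, with~\eqref{eq:C3} securing $u_j \ge 0$; your explicit formula for $u_j$ matches the paper's display verbatim. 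The one genuinely different ingredient is the final summation step: the paper bounds $\sum_{j=1}^{n-1}\omega_{j,n}^{-2}$ by comparison with the limiting Matsubara frequencies $\omega_j$ of Eq.~\ref{eq:continuous_eigenvalues} and the Basel sum, whereas you evaluate the finite-$n$ sum exactly using the multiset identity $\{\omega_{j,n}^2\}_{j=1}^{n-1} = \{4\omega_n^2\sin^2(\pi k/n)\}_{k=1}^{n-1}$ and $\sum_{k=1}^{n-1}\csc^2(\pi k/n) = (n^2-1)/3$, obtaining $\sum_{j=1}^{n-1}\omega_{j,n}^{-2} = (\hbar\beta)^2(n^2-1)/(12 n^2) < (\hbar\beta)^2/12$. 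Both identities are correct (and your exact evaluation incidentally verifies the paper's claim that the finite-$n$ sum is dominated by its $n\to\infty$ limit); your route yields a prefactor of order $\sqrt{1/12}$, modulo the tensorization constant you leave implicit in ``$\lesssim$'', which is strictly smaller than the stated $\sqrt{4/3}$ — so it does not literally ``reproduce'' Eq.~\ref{eq:tv_error}, but it proves a stronger bound that implies the theorem. The only loose end is that the reduction step is sketched rather than executed: you should fix a convention and constant in $d_{\mathrm{TV}} \le c\, d_{\mathrm{H}}$ (or Pinsker plus the Gaussian KL formula) and verify that the per-mode divergence is controlled by $u_j^2$ with an absolute constant even when $u_j$ is not small, as the paper does through the elementary chain $2 d_{\mathrm{H}}^2 \le 2(s_j - s_{j,\Delta t})^2/(s_j^2 + s_{j,\Delta t}^2) \le (1 - s_j/s_{j,\Delta t})^2 \le (1 - s_j^2/s_{j,\Delta t}^2)^2$; this is routine bookkeeping, not a gap in the idea.
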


\begin{proof}
Subadditivity of the total variation distance $d_{\mathrm{TV}}$ between product distributions and its equivalence with the Hellinger distance\cite{Gibbs2002} $d_{\mathrm{H}}$ lead to the inequalities
\begin{align} \label{eq:tv_error_ineqs}
    d_{\mathrm{TV}} & (\mu_{n}, \mu_{n, \Delta t})^2  \le \sum_{j=1}^{n-1} d_{\mathrm{TV}}(\mu_{j}, \mu_{j, \Delta t})^2 \nonumber \\
    &\le \sum_{j=1}^{n-1} 2 \, d_{\mathrm{H}}(\mu_{j}, \mu_{j, \Delta t})^2  \le \sum_{j=1}^{n-1} \frac{ 2 ( s_{j} - s_{j, \Delta t} )^2 }{ ( s_{j}^2 + s_{j, \Delta t}^2 ) } \nonumber \\
    &\le \sum_{j=1}^{n-1} \left( 1 - \frac{ s_{j} }{ s_{j, \Delta t} } \right)^{\!2} \le \sum_{j=1}^{n-1} \left( 1 - \frac{ s_{j}^2 }{ s_{j, \Delta t}^2 } \right)^{\!2} \;,
\end{align}
where the second-to-last step uses Eq.~\ref{eq:exact_numerical_IM_1D_comparison} and the last step uses the elementary inequality $(1-x^2)^2 \ge (1-x)^2$ valid for all $x \ge 0$.

Since $\tan(\cdot)$ increases superlinearly on the interval $(0, \pi)$, we have $\theta(x)/2 \le \tan(\theta(x)/2) \le x/2$ for $x > 0$, where the second inequality uses \eqref{eq:C3}.
Consequently, the $j$th summand in Eq.~\ref{eq:tv_error_ineqs} admits the bound
\begin{equation*}
\begin{aligned}
    \left( 1 - \frac{ s_{j}^2 }{ s_{j, \Delta t}^2 } \right)^{\!2}
    &= \left( \frac{ \Lambda/m }{ \omega_{j,n}^2 + \Lambda/m } \left( \frac{\omega_{j,n} \Delta t / 2} {\tan \left( \theta(\omega_{j,n} \Delta t) / 2 \right)} - 1 \right) \right)^{\!2} \\
    &\le \left( \frac{\Delta t^2 \Lambda / m}{ (\omega_{j, n} \Delta t)^2} \left( \frac{\omega_{j,n} \Delta t}{\theta(\omega_{j,n} \Delta t)} - 1 \right) \right)^{\!2} \\
    &\le \left( \frac{\Delta t^2 \Lambda}{m} \right)^{\!2} \frac{1}{(\omega_{j,n} \Delta t)^{2}} \;,
\end{aligned}
\end{equation*}
where the last line uses the lower bound in \eqref{eq:C4}.
Using that for any even positive integer $n$
\begin{equation*}
    \sum_{j=1}^{n-1} \frac{1}{\omega_{j,n}^{2}}
    < \lim_{n \to \infty} \sum_{j=1}^{n-1} \frac{1}{\omega_{j,n}^{2}}
    = \sum_{j=1}^{\infty} \frac{1}{\omega_{j}^{2}}
    < \left( \frac{ \hbar \beta } { \pi } \right)^{\!2} \sum_{j=1}^{\infty} \frac{2}{j^2} \;,
\end{equation*}
where we used Eq.~\ref{eq:continuous_eigenvalues}, the bound in Eq.~\ref{eq:tv_error_ineqs} becomes
\begin{equation*}
\begin{aligned}
    d_{\mathrm{TV}}(\mu_{n}, \mu_{n, \Delta t})^2
    &< \left( \frac{\Delta t^2 \Lambda}{m} \right)^{\!2} \left( \frac{ \hbar \beta } { \pi \Delta t } \right)^{\!2} \sum_{j=1}^{\infty} \frac{2}{j^2} \;.
\end{aligned}
\end{equation*}
Taking square roots and using the Riemann zeta function\cite{Abramovitz1965} to evaluate the infinite sum yields Eq.~\ref{eq:tv_error}.
\end{proof}

\subsection{Asymptotic variance of kinetic energy observables for harmonic external potentials in the infinite-friction limit}
\label{analytical_asymptotic_variance}

In Section~\ref{harmonic_oscillator_results}, Figs.~\ref{fig:harmonic_oscillator_equilibrium}b and~\ref{fig:harmonic_oscillator_equilibrium}d show that the T-RPMD scheme specified by $\theta(x) = 2\arctan(x/2)$, which coincides with the Cayley-modified BAOAB scheme introduced in Ref.~\onlinecite{Korol2020}, provides the smallest integrated autocorrelation time (Eq.~\ref{eq:obs_iact}) for quantum kinetic energy observables (Eq.~\ref{eq:qke}) among several schemes with properties~(P1)-(P5). 
In this section, we show that this scheme minimizes an upper bound (Eq.~\ref{eq:qke_iact_harmonic_gammainf_upperbnd}) on the integrated autocorrelation time of the quantum kinetic energy among all dimension-free and strongly-stable BAOAB-like schemes for harmonic external potentials.

To this end, note that for a $n$-bead thermostatted ring polymer with external potential $V_n^\textrm{ext}(\vec{q}) = \frac{\Lambda}{2n} | \vec{q} |^2$, Eq.~\ref{eq:qke} can be rewritten
\begin{equation} \label{eq:qke_nm_harmonic}
\begin{aligned}
    \mathsf{KE}^\mathrm{pri}_n(\vec{\varrho}) &= \frac{n}{2 \beta} - \sum_{j=1}^{n-1} \frac{m_n \omega_{j,n}^2}{2} \varrho_j^2 \;\; \text{and} \\
    \mathsf{KE}^\mathrm{vir}_n(\vec{\varrho}) &= \frac{1}{2 \beta} + \sum_{j=1}^{n-1} \frac{\Lambda}{2n} \varrho_j^2 \;
\end{aligned}
\end{equation}
where $\vec{\varrho}$ is defined in Eq.~\ref{eq:nm_transform}.
In the following, we denote both observables in Eq.~\ref{eq:qke_nm_harmonic} as $\mathsf{KE}_n$ and distinguish between the two as needed.

To control the integrated autocorrelation time of $\mathsf{KE}_n$, we need the stationary autocorrelation $\mathrm{Cor}\big( \mathsf{KE}_n(\vec{\varrho}^{(0)}), \mathsf{KE}_n(\vec{\varrho}^{(k \Delta t)}) \big)$ for $k \ge 0$.
Note that the distributions of $\vec{\varrho}^{(k \Delta t)}$ and $\vec{\varrho}^{(0)}$ are equal by stationarity, and that components $( \varrho_j )_{j=0}^{n-1}$ are uncorrelated in a harmonic external potential.
Thus,
\begin{align*}
    \mathrm{Cor} \big( \mathsf{KE}_n(\vec{\varrho}^{(0)}), & \, \mathsf{KE}_n(\vec{\varrho}^{(k\Delta t)}) \big) \\
    &= \sum_{j=1}^{n-1} \chi_{j,n} \mathrm{Cor} \big( |\varrho_j^{(0)}|^2 , |\varrho_j^{(k \Delta t)}|^2 \big) \;,
\end{align*}
where
\begin{equation*}
    \chi_{j,n} = \frac{\kappa_{j,n}^2 \mathrm{Var} \big( |\varrho_j^{(0)}|^2 \big)}{\sum_{i=1}^{n-1} \kappa_{i,n}^2 \mathrm{Var} \big( |\varrho_i^{(0)}|^2 \big)} 
\end{equation*}
and
\begin{equation*}
    \kappa_{j,n} =
    \begin{cases}
    \frac{m_n \omega_{j,n}^2}{2} &\text{ for $\mathsf{KE}_n^\mathrm{pri}$} \;, \\
    \frac{\Lambda}{2 n} &\text{ for $\mathsf{KE}_n^\mathrm{vir}$} \;.
    \end{cases}
\end{equation*}
If the evolution of the ring polymer is governed by the BAOAB-like update in Eq.~\ref{eq:BGOGB_1D}, then the $j$th mode satisfies
\begin{equation*}
\begin{aligned}
    \mathrm{Cor} \big( |\varrho_j^{(0)}|^2, |\varrho_j^{(k \Delta t)}|^2 \big)
    &= \frac{ 
    \mathrm{Cov} \big( |\varrho_j^{(0)}|^2, |\varrho_j^{(k \Delta t)}|^2 \big)
    }{
    \mathrm{Var} \big( |\varrho_j^{(0)}|^2 \big)
    } \\
    &= ( \m{\mathcal{M}}_{j,n}^k )_{11}^2 \;,
\end{aligned}
\end{equation*}
where we used that the phase $\begin{bmatrix} \varrho_j^{(k \Delta t)} & \varphi_j^{(k \Delta t)} \end{bmatrix}^\mathrm{T}$ follows a centered Gaussian distribution with covariance given in Eq.~\ref{eq:numerical_IM_1D_a} for all $k \ge 0$.
Therefore, in the infinite-friction limit where $\m{\mathcal{M}}_{j,n}$ is given in Section~\ref{contr}, the integrated autocorrelation time of $\mathsf{KE}_n$ evaluates to
\begin{align} \label{eq:qke_iact_harmonic_gammainf_upperbnd}
    \frac{\mathrm{aVar}(\mathsf{KE}_n)}{\mathrm{Var}(\mathsf{KE}_n)}
    &= 1 + 2 \sum_{j=1}^{n-1} \chi_{j,n} \sum_{k=1}^\infty (\m{\mathcal{M}}_{j,n}^k)_{11}^2 \nonumber \\
    &\le 1 + \frac{1}{2} \max_{1 \le j \le n-1} \left| \frac{1 + A(\omega_{j,n} \Delta t)}{1 - A(\omega_{j,n} \Delta t)} \right| \;,
\end{align}
where simplification of $( \m{\mathcal{M}}_{j,n}^k )_{11}$ was aided by the Cayley--Hamilton theorem for $2 \times 2$ matrices,\cite{Andreescu2016} $A(x)$ is defined in Section~\ref{ergodicity}, and in the last line we used that $\sum_{j=1}^{n-1} \chi_{j,n} = 1$.
Eq.~\ref{eq:qke_iact_harmonic_gammainf_upperbnd} states that the integrated autocorrelation time of $\mathsf{KE}_n$ can only be as small as that of the component $| \varrho_j |^2$ exhibiting the slowest uncorrelation at stationarity.

Having derived Eq.~\ref{eq:qke_iact_harmonic_gammainf_upperbnd}, we now prove our claim for this section.
Let $x := \omega_{j,n} \Delta t > 0$ and $\alpha := \Delta t^2 \Lambda / m \in (0, 4)$.
For fixed $x$ and $\alpha$, the function $A(x) := \cos(\theta(x)) - \frac{\alpha}{2x} \sin(\theta(x))$ monotonically decreases toward $-1$ as the angle $\theta(x)$ increases toward $\pi$.
Consequently, the function $\left| \big( 1 + A(x) \big) / \big( 1 - A(x) \big) \right|$ decreases (toward $0$) as $\theta(x)$ increases (toward $\pi$), but condition~\eqref{eq:C3} requires $\theta(x) \le 2\arctan(x/2)$ to achieve stable evolution.
Therefore, because it yields the largest stable angle, the choice $\theta(x) = 2\arctan(x/2)$ (i.e., the Cayley angle) minimizes the upper bound in Eq.~\ref{eq:qke_iact_harmonic_gammainf_upperbnd}.

A similar argument can be made to support the conjecture, suggested by Fig.~\ref{fig:harmonic_oscillator_equilibrium}f, that the non-centroid velocity estimator for the classical kinetic energy $\mathsf{KE}_n^\textrm{cla}$ in Eq.~\ref{eq:cke}, equivalently written
\begin{equation} \label{eq:cke_nm_harmonic}
    \mathsf{KE}_n^\mathrm{cla}(\vec{\varphi}) = \frac{m_n}{2(n-1)} \sum_{j=1}^{n-1} \varphi_j^2
\end{equation}
with $\vec{\varphi}$ defined in Eq.~\ref{eq:nm_transform}, exhibits a \textit{maximal} integrated autocorrelation time if the Cayley angle $\theta(x) = 2\arctan(x/2)$ is used.
Indeed, the integrated autocorrelation time of this estimator is bounded by 
\begin{equation} \label{eq:cke_iact_harmonic_gammainf_upperbnd}
    \frac{\mathrm{aVar}(\mathsf{KE}_n^\textrm{cla})}{\mathrm{Var}(\mathsf{KE}_n^\textrm{cla})}
    \le 1 + \frac{1}{2} \max_{1 \le j \le n-1} \left| \frac{1 - A(\omega_{j,n} \Delta t)}{1 + A(\omega_{j,n} \Delta t)} \right| \;,
\end{equation}
where the function $\left| \big( 1 - A(x) \big) / \big( 1 + A(x) \big) \right|$ is maximized as $\theta(x)$ approaches the largest stable (i.e., Cayley) angle for fixed $x$ and $\alpha$.

To conclude, we note that the conclusions of this section hold for arbitrary friction schedules despite our use of the infinite-friction limit in Eqs.~\ref{eq:qke_iact_harmonic_gammainf_upperbnd} and~\ref{eq:cke_iact_harmonic_gammainf_upperbnd}.

\subsection{Stability interval calibration for liquid water simulations}
\label{liquid_water_stability}

This section describes the computational procedure used to identify $\Delta t = 1.4 \textrm{ fs}$ as close to the upper bound of the stability interval of T-RPMD applied to q-TIP4P/F liquid water at $298 \textrm{ K}$ and $0.998 \textrm{ g/cm$^3$}$.
The procedure consisted of integrating an ensemble of $10^4$ thermally initialized T-RPMD trajectories using the algorithm outlined in Section~\ref{trpmd} in its single-bead realization (identical to velocity Verlet in classical MD\cite{Leimkuhler2015}), and counting the fraction of trajectories that remained within an energy sublevel (i.e., did not exhibit detectable energy drift) throughout their duration for each tested time-step.
A time-step was deemed stable if $99\%$ or more of the ensemble remained in an energy sublevel throughout a $50$-picosecond time period.
A range of time-steps was tested, and the fraction of stable trajectories at each time-step is reported in Fig.~\ref{fig:liquid_water_stabfrac}.

To avoid initialization bias in the stability interval estimation, thermalized initial phase-points were generated with a Metropolized Markov-chain Monte Carlo sampler targeted at the equilibrium configurational distribution of the liquid.
Specifically, a randomized Hamiltonian Monte Carlo\cite{BouRabee2017a,BouRabee2018} (rHMC) simulation of sufficient length was used to thermalize a crystalline configuration of the system at the target density, and $10^2$ configurations were extracted from well-separated points along the rHMC trajectory.
Each of these approximately independent draws from the equilibrium \textit{configurational} distribution of the liquid at the target physical conditions was subsequently paired with $10^2$ independent velocities drawn from the corresponding Maxwell--Boltzmann distribution, yielding $10^4$ approximately independent draws from the \textit{phase space} distribution of the classical liquid at thermal equilibrium.

\begin{figure}
    \centering
    \includegraphics[width=0.6\columnwidth]{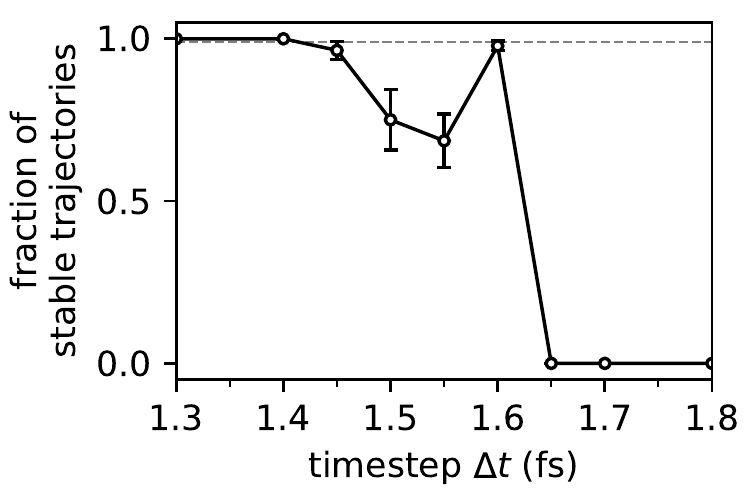}
    \caption{\small 
    Stability interval calibration for q-TIP4P/F room-temperature liquid water simulations.
    Data points correspond to the fraction of thermally initialized single-bead T-RPMD trajectories that remained stable over a $50$-picosecond period at the respective integration time-step $\Delta t$.
    Error bars correspond to the standard error of the fraction of stable trajectories across initialization points with different configurations.
    The gray dashed line marks the $\ge\!99\%$ threshold for deeming a time-step \textit{stable},
    which no time-step beyond $\Delta t = 1.4 \textrm{ fs}$ reaches.
    }
    \label{fig:liquid_water_stabfrac}
\end{figure}

\subsection{Simulation and estimation details}
\label{calc_details}

This section compiles simulation protocols and statistical estimation methods used to generate Figs.~\ref{fig:harmonic_oscillator_equilibrium} for the one-dimensional quantum harmonic oscillator, and Figs.~\ref{fig:liquid_water_equilibrium_all} and~\ref{fig:liquid_water_equilibrium_spec} for room-temperature liquid water.

\subsubsection{One-dimensional quantum harmonic oscillator}

Numerical equilibrium averages and integrated autocorrelation times for the quantum harmonic oscillator were estimated by averaging over a $10$-nanosecond T-RPMD trajectory integrated using the algorithm listed in Section~\ref{trpmd}, and initialized at an exact sample from the numerical stationary distribution (listed for the $j$th ring-polymer mode in Eq.~\ref{eq:numerical_IM_1D_b}) corresponding to the physical parameters (i.e., $\Lambda$, $m$, and $\beta$) and simulation parameters (i.e., $n$, $\Delta t$, and the function $\theta$) listed in Section~\ref{harmonic_oscillator_results}.
Specifically, the statistics reported in Fig.~\ref{fig:harmonic_oscillator_equilibrium} were obtained by partitioning the T-RPMD trajectory into $10$ disjoint blocks, estimating the equilibrium average and autocorrelation time within each block, and computing the sample mean and standard error among the resulting block estimates with $1000$ bootstrap resamples.

We now describe the formulas and methods used to obtain block estimates for the equilibrium mean and integrated autocorrelation time.
The equilibrium average $\mu_{\mathsf{O}_n}$ of observable $\mathsf{O}_n$ within each block of the partitioned T-RPMD trajectory was estimated using the standard estimator\cite{Priestly1981}
\begin{equation} \label{eq:mean_obs_estimator}
    \hat{\mu}_{\mathsf{O}_n} = \frac{1}{K} \sum_{k=0}^{K-1} \mathsf{O}_n^{(k \Delta t)} \;,
\end{equation}
where $K$ is the number of steps in the block (i.e., the block size) and $\mathsf{O}_n^{(k \Delta t)}$ the value of $\mathsf{O}_n$ at the $k$th step within the block.
Similarly, the lag-$k \Delta t$ autocovariance $C_{\mathsf{O}_n}(k \Delta t)$ was estimated using\cite{Priestly1981}
\begin{equation*}
    \hat{C}_{\mathsf{O}_n}(k \Delta t) = \sum_{\ell=0}^{K-k-1} \frac{ \big( \mathsf{O}_n^{(\ell \Delta t)} \!-\! \hat{\mu}_{\mathsf{O}_n} \big) \big( \mathsf{O}_n^{\left((\ell + k) \Delta t\right)} \!-\! \hat{\mu}_{\mathsf{O}_n} \big) }{ K - k }
\end{equation*}
for $0 \le k \Delta t \le (K-1) \Delta t = 1 \textrm{ ns}$.
The integrated autocorrelation time was subsequently estimated using\cite{Priestly1981,Sokal1997}
\begin{equation} \label{eq:iact_obs_estimator}
    \widehat{ \frac{ \mathrm{aVar}_{\mathsf{O}_n} }{ \mathrm{Var}_{\mathsf{O}_n} } }(M) = 1 + 2 \sum_{k=1}^M \frac{ \hat{C}_{\mathsf{O}_n} ( k \Delta t ) }{ \hat{C}_{\mathsf{O}_n} ( 0 ) } \;,
\end{equation}
where $0 < M \le K$ is a suitable cutoff.
The choice of $M$ is nontrivial, as it carries a trade-off between bias (more pronounced at small $M$) and variance (more pronounced at large $M$).\cite{Sokal1997}
To choose $M$ judiciously, we follow the \textit{automatic windowing} (AW) method described in Appendix C of Ref.~\onlinecite{Madras1988}.
The AW method dictates that $M$ should correspond to the smallest lag that satisfies the inequality
\begin{equation*}
    M \ge c \, \widehat{ \frac{ \mathrm{aVar}_{\mathsf{O}_n} }{ \mathrm{Var}_{\mathsf{O}_n}} }(M) \;,
\end{equation*}
where the parameter $c > 0$ dictates the variance-bias trade-off in place of $M$, and is chosen as large as possible to reduce the bias of the estimator for a given variance threshold.

Fig.~\ref{fig:harmonic_oscillator_awdemo} illustrates usage of the AW method for integrated autocorrelation time estimation, using trajectory data generated by the T-RPMD scheme with $\theta(x) = 2\arctan(x/2)$ at $n = 64$ beads and $\Delta t = 2.0 \textrm{ fs}$, and focusing on the observables $\mathsf{KE}_n^\textrm{pri}$ (black), $\mathsf{KE}_n^\textrm{vir}$ (red), and $\mathsf{KE}_n^\textrm{cla}$ (cyan) introduced in Section~\ref{harmonic_oscillator_results}.
The estimated integrated autocorrelation times are plotted with solid lines in Fig.~\ref{fig:harmonic_oscillator_awdemo}a for various values of $c$, and the corresponding cutoffs $M$ are plotted in Fig.~\ref{fig:harmonic_oscillator_awdemo}b.
Exact integrated autocorrelation times are plotted with dashed lines in Fig.~\ref{fig:harmonic_oscillator_awdemo}a.
Note that as $c$ (and thus $M$) increases, the estimates converge to the corresponding exact values at the expense of a larger variance, which can nonetheless be controlled by adjusting the block size $K$.

\begin{figure}
    \centering
    \includegraphics[width=0.6\columnwidth]{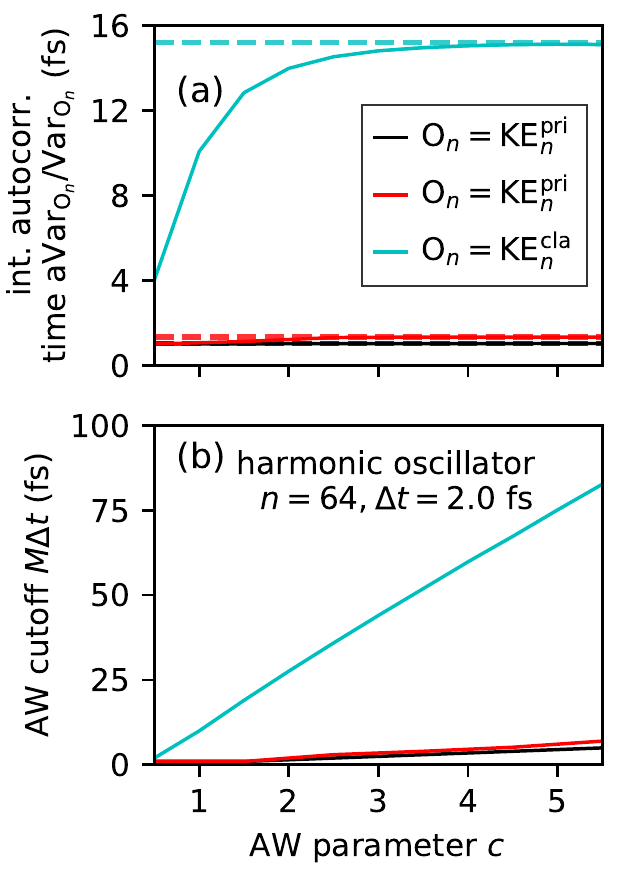}
    \caption{\small 
    Integrated autocorrelation times of several observables of the one-dimensional harmonic oscillator in Section~\ref{harmonic_oscillator_results}, estimated with the AW method.
    Trajectory data for the estimates was generated using the T-RPMD scheme with $\theta(x) = 2\arctan(x/2)$ at $n = 64$ beads and $\Delta t = 2.0 \textrm{ fs}$, and processed as described in the current section.
    Estimated (resp.\ exact) integrated autocorrelation times for observables $\mathsf{KE}_n^\textrm{pri}$ (black), $\mathsf{KE}_n^\textrm{vir}$ (red), and $\mathsf{KE}_n^\textrm{cla}$ (cyan) are shown in solid (resp.\ dashed) lines in panel (a) as a function of the windowing parameter $c$.
    Panel (b) plots the cutoffs determined by the choice of $c$ for the three observables, where the linear relation between $M \Delta t$ and $c$ at large values of the latter corroborates the non-spurious convergence of the autocorrelation time estimates.
    }
    \label{fig:harmonic_oscillator_awdemo}
\end{figure}

\subsubsection{Room-temperature liquid water}

The equilibrium averages and integrated autocorrelation times reported in Fig.~\ref{fig:liquid_water_equilibrium_all} were obtained by averaging over $10$-nanosecond T-RPMD trajectories integrated for each considered bead number $n$, time-step $\Delta t$, and function $\theta$.
All trajectories were initialized at an approximate sample from the corresponding numerical equilibrium distribution, obtained by thermalizing for $20$ picoseconds a classical (i.e., $n = 1$) configuration of the system into the $n$-bead ring-polymer phase space.
The reference equilibrium averages plotted with dashed lines in Fig.~\ref{fig:liquid_water_equilibrium_all} were obtained by averaging over a one-nanosecond, $256$-bead
staging PIMD\cite{Tuckerman1993} trajectory integrated at a $0.1$-fs time-step with the mass and friction parameters recommended in Ref.~\onlinecite{Liu2016}, and initialized with the same protocol used for the T-RPMD simulations.

The observables considered in Fig.~\ref{fig:liquid_water_equilibrium_all} measure properties per $\mathrm{H}$ atom or per $\mathrm{H}_2\mathrm{O}$ molecule, and thus the reported values are averages over estimates obtained for each simulated moiety.
The equilibrium mean and integrated autocorrelation time of observable $\mathsf{O}_n$ for each moiety was estimated by partitioning the trajectory of the moiety into $10$ disjoint $1$-nanosecond blocks, evaluating Eqs.~\ref{eq:mean_obs_estimator} and~\ref{eq:iact_obs_estimator} within each block, and determining the sample mean and standard error among the block estimates with $1000$ bootstrap resamples.
The AW method\cite{Madras1988} was applied to choose a cutoff lag $M \le 1 \textrm{ ns}$ in Eq.~\ref{eq:iact_obs_estimator}, as illustrated in Fig.~\ref{fig:harmonic_oscillator_awdemo} for the harmonic oscillator application.

The T-RPMD trajectories used to generate Fig.~\ref{fig:liquid_water_equilibrium_all} also yielded Fig.~\ref{fig:liquid_water_equilibrium_spec}, where panels~(a) and~(c) plot autocovariance functions of the form $\frac{1}{N_{\mathrm{H}_2\mathrm{O}}} \sum_{i=1}^{N_{\mathrm{H}_2\mathrm{O}}} \mathbb{E} \big( \bar{\mathsf{O}}_i(0) \cdot \bar{\mathsf{O}}_i(k \Delta t) \big)$, where $N_{\mathrm{H}_2\mathrm{O}} = 32$ is the number of simulated $\mathrm{H}_2\mathrm{O}$ molecules and $\bar{\mathsf{O}}_i(k \Delta t)$ is the bead-averaged value of observable $\mathsf{O}$ (e.g., the molecular dipole moment or center-of-mass velocity) on the $i$th molecule at time $k \Delta t$ along a stationary T-RPMD trajectory.
The autocovariance $\mathbb{E} \big( \bar{\mathsf{O}}_i(0) \cdot \bar{\mathsf{O}}_i(t) \big)$ was estimated for the lags $k \Delta t$ shown in Fig.~\ref{fig:liquid_water_equilibrium_spec} by
\begin{equation*}
    \mathbb{E} \big( \bar{\mathsf{O}}_i(0) \cdot \bar{\mathsf{O}}_i(k \Delta t) \big) \\
    \approx \sum_{\ell=0}^{K-k-1} \frac{\bar{\mathsf{O}}_i^{(\ell \Delta t)} \cdot \bar{\mathsf{O}}_i^{\left((\ell + k) \Delta t\right)}}{K-k} \;,
\end{equation*}
where $K \Delta t = 1 \textrm{ ns}$ is the length of each block in the partitioned $10$-nanosecond T-RPMD trajectory.
As with the results in Fig.~\ref{fig:liquid_water_equilibrium_all}, autocovariance statistics for each molecule were obtained from block estimates via bootstrapping, and Figs.~\ref{fig:liquid_water_equilibrium_spec}a and~\ref{fig:liquid_water_equilibrium_spec}c report molecule-averaged statistics.

Fig.~\ref{fig:liquid_water_convergence} validates the $20$-picosecond thermalization interval used to initialize the trajectories that generated Figs.~\ref{fig:liquid_water_equilibrium_all} and~\ref{fig:liquid_water_equilibrium_spec}.
In detail, Figs.~\ref{fig:liquid_water_convergence}a and~\ref{fig:liquid_water_convergence}b (resp., Figs.~\ref{fig:liquid_water_convergence}c and~\ref{fig:liquid_water_convergence}d) plot the non-equilibrium mean of the primitive and virial quantum kinetic energy per $\mathrm{H}$ atom (resp.\ the mean $\mathrm{O}\!-\!\mathrm{H}$ bond and $\mathrm{H}\!-\!\mathrm{O}\!-\!\mathrm{H}$ angle potential energy per water molecule) as it approaches the equilibrium value in Figs.~\ref{fig:liquid_water_equilibrium_all}a and~\ref{fig:liquid_water_equilibrium_all}c (resp., Figs.~\ref{fig:liquid_water_equilibrium_all}e and~\ref{fig:liquid_water_equilibrium_all}g) for a $64$-bead ring polymer at a $1.4 \textrm{ fs}$ time-step with the considered choices of $\theta$.
At each time $k \Delta t$ within the $20$-picosecond interval, the non-equilibrium mean is estimated by averaging across $1000$ independent trajectories initialized at a point-mass distribution on the $n$-bead ring-polymer phase space centered at the classical (i.e., $n = 1$) sample used to initialize the reported simulations.
Within statistical uncertainty, the non-equilibrium mean for each observable converges to its equilibrium value within the $20$-picosecond interval at visually indistinguishable rates across the tested choices of $\theta$.

\begin{figure}
    \centering
    \includegraphics[width=\columnwidth]{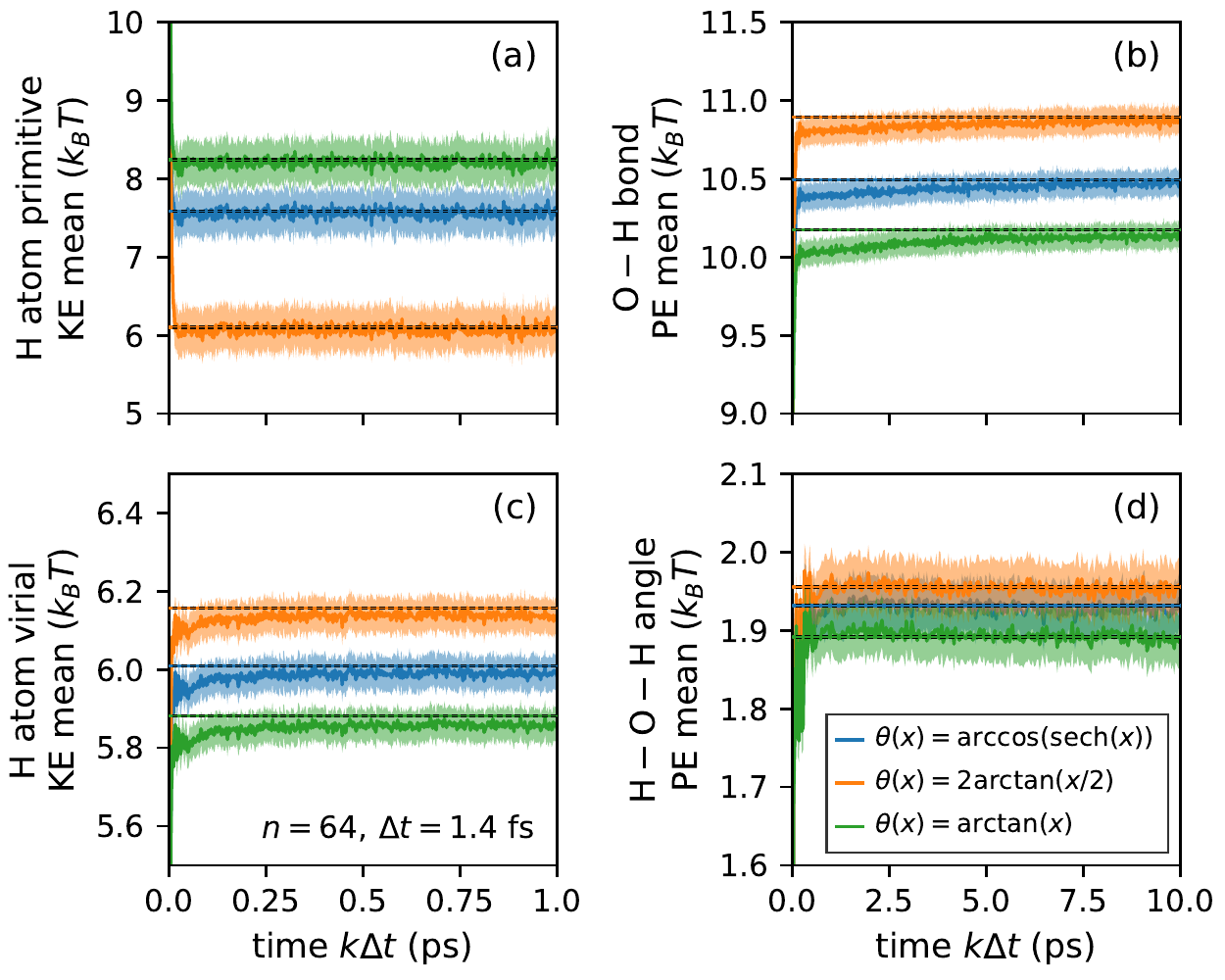}
    \caption{\small 
    Convergence to equilibrium of the BAOAB-like schemes considered in Section~\ref{liquid_water_results} with $n = 64$ ring-polymer beads and a $\Delta t = 1.4 \textrm{ fs}$ time-step.
    With respect to the non-equilibrium $64$-bead configurational distribution evolved from a point mass at a classical (i.e., $n = 1$) configuration, panels (a) and (c) plot the mean kinetic energy per $\mathrm{H}$ atom for the $n$-bead system as per the primitive and virial estimators, respectively, for times up to $1.0 \textrm{ ps}$.
    Panels (b) and (d), respectively, plot the non-equilibrium mean $\mathrm{O}\!-\!\mathrm{H}$-bond and $\mathrm{H}\!-\!\mathrm{O}\!-\!\mathrm{H}$-angle potential energy per q-TIP4P/F water molecule,\cite{Habershon2009} for times up to $10 \textrm{ ps}$.
    The lightly shaded interval around each curve corresponds to the standard error of the estimated non-equilibrium mean, computed with $1000$ bootstrap resamples from a sample of $1000$ independent trajectories.
    }
    \label{fig:liquid_water_convergence}
\end{figure}

\bibliography{main}

\end{document}